\documentclass{fmcad}
\usepackage[utf8]{inputenc}
\usepackage[T1]{fontenc}
\usepackage{xspace}
\usepackage{graphicx}

\usepackage{amsmath,amssymb,amsfonts,amsthm}
\usepackage{mathtools}
\newtheorem{assumption}{Assumption}
\newtheorem{definition}{Definition}
\newtheorem{lemma}{Lemma}

\newtheoremstyle{nonitalic} 
  {} 
  {} 
  {\normalfont} 
  {} 
  {\bfseries} 
  {.} 
  { } 
  {} 

\theoremstyle{nonitalic} 
\newtheorem{example}{Example}
\newtheorem{remark}{Remark}

\usepackage{hyperref}
\usepackage{cleveref}
\crefname{assumption}{assumption}{assumptions}
\crefname{definition}{definition}{definitions}
\crefname{example}{example}{examples}
\crefname{lemma}{lemma}{lemmas}
\crefname{remark}{remark}{remarks}

\usepackage{thm-restate}

\usepackage{stmaryrd}
\usepackage{etoolbox}
\usepackage{float}
\usepackage{bm}
\usepackage{multicol}
\usepackage[inline]{enumitem}
\usepackage[normalem]{ulem}
\usepackage{xcolor}

\usepackage{todonotes}

\usepackage{pgf}
\usepackage{pgfplots}
\pgfplotsset{compat=newest}
\usepgfplotslibrary{groupplots}
\usepgfplotslibrary{dateplot}



\usepackage{tabto}
\NumTabs{12}

\usepackage{tikz}
\usetikzlibrary{
    arrows,
    arrows.meta,
    calc,
    decorations.pathreplacing,
    positioning,
    shapes.geometric}

\tikzstyle{process} = [rectangle, minimum width=3cm, minimum height=1cm, text centered, draw=black, fill=orange!30]
\tikzstyle{decision} = [diamond, minimum width=3cm, minimum height=1cm, inner sep = 1pt, text centered, aspect = 3, draw=black, fill=green!30]
\tikzstyle{startstop} = [rectangle, rounded corners, minimum width=1.4cm, minimum height=0.7cm,text centered, draw=black, fill=red!30, inner sep = 1pt]
\tikzstyle{plugin_process} = [rectangle, draw, rounded corners,text width = 6cm, inner sep=6pt, draw=black, fill=purple!30]
\tikzstyle{arrow} = [thick,->,>=stealth]

\usepackage{colortbl}

\usepackage{algorithm}
\usepackage[noend]{algpseudocode}

\algnewcommand\algorithmicfixed{\textbf{Fixed:}}
\algnewcommand\Fixed{\item[\algorithmicfixed]}
\algnewcommand\algorithmicglobal{\textbf{Global:}}
\algnewcommand\Global{\item[\algorithmicglobal]}
\algnewcommand\algorithmicswitch{\textbf{switch}}
\algnewcommand\algorithmiccase{\textbf{case}}
\algnewcommand\algorithmicforeach{\textbf{foreach}}
\algnewcommand\algorithmicnondet{\textbf{nondet}}
\algnewcommand\algorithmicor{\textbf{or}}
\algnewcommand\algorithmicassert{\textbf{assert}}
\algnewcommand\algorithmiclet{\textbf{let}}
\algnewcommand\Assert[1]{\State \algorithmicassert(#1)}
\algdef{SE}[SWITCH]{Switch}{EndSwitch}[1]{\algorithmicswitch\ #1\ \algorithmicdo}{\algorithmicend\ \algorithmicswitch}%
\algtext*{EndSwitch}%
\algdef{SE}[CASE]{Case}{EndCase}[1]{\algorithmiccase\ #1\ \textbf{:}}{\algorithmicend\ \algorithmiccase}%
\algtext*{EndCase}%
\algdef{S}[FOR]{ForEach}[1]{\algorithmicforeach\ #1\ \algorithmicdo}%
\algdef{SE}[NONDET]{Nondet}{EndNondet}{\algorithmicnondet\ }{\algorithmicend\ \algorithmicnondet}%
\algtext*{EndNondet}%
\algdef{SE}[Or]{Or}{EndOr}{\algorithmicor\ }{\algorithmicend\ \algorithmicor}%
\algtext*{EndOr}%
\algdef{SE}[LET]{Let}{EndLet}{\algorithmiclet\ }{\algorithmicend\ \algorithmiclet}%
\algtext*{EndLet}%

\AtBeginEnvironment{algorithm}{\linespread{1.05}}



\usepackage{caption}
\usepackage{diagbox}

\newcommand{\newextmathcommand}[2]{%
    \newcommand{#1}{\ensuremath{#2}\xspace}
}

\newcommand{\renewextmathcommand}[2]{%
    \renewcommand{#1}{\ensuremath{#2}\xspace}
}

\makeatletter
\newcommand{\labeltext}[2]{%
  #1%
  \@bsphack%
  \csname phantomsection\endcsname 
  \def\@currentlabel{#1}{\label{#2}}%
  \@esphack%
}
\makeatother

\makeatletter
               {\list{}{\leftmargin=0pt
                        \labelwidth\z@ \itemindent-\leftmargin
                        }}%
               {\endlist}
\makeatother

\newextmathcommand{\N}{\mathbb{N}}
\newextmathcommand{\Np}{\Nat_+}
\newextmathcommand{\Z}{\mathbb{Z}}
\newextmathcommand{\Q}{\mathbb{Q}}
\newextmathcommand{\Alg}{\overline{\mathbb{Q}}}
\newextmathcommand{\R}{\mathbb{R}}
\newextmathcommand{\C}{\mathbb{C}}
\newextmathcommand{\B}{\mathbb{B}}
\newextmathcommand{\D}{\mathbb{D}}
\newextmathcommand{\FP}{\mathbb{D}_{\textit{fp}}}

\newextmathcommand{\fptime}{\textup{\textsc{FP}}}
\newextmathcommand{\ptime}{\textup{\textsc{P}}}
\newextmathcommand{\np}{\textup{\textsc{NP}}}
\newextmathcommand{\pspace}{\textup{\textsc{PSpace}}}
\newextmathcommand{\nexptime}{\textup{\textsc{NExp}}}
\newextmathcommand{\exptime}{\textup{\textsc{Exp}}}
\newextmathcommand{\twoexptime}{\textup{\textsc{2Exp}}}
\newextmathcommand{\threeexptime}{\textup{\textsc{3Exp}}}
\newextmathcommand{\expspace}{\textup{\textsc{ExpSpace}}}
\newextmathcommand{\twoexpspace}{\textup{\textsc{2ExpSpace}}}
\newextmathcommand{\tower}{\textup{\textsc{Tower}}}
\newextmathcommand{\cclass}{\textup{\textsc{C}}}

\newextmathcommand{\poly}{\textup{poly}}

\renewextmathcommand{\phi}{\varphi}
\renewcommand{\vec}{\bm}

\newextmathcommand{\lcm}{{\rm lcm}}

\newcommand{\infnorm}[1]{\ensuremath{{\parallel}#1{\parallel}}_\infty\xspace}

\DeclarePairedDelimiter{\round}{\lfloor}{\rceil}

\newextmathcommand{\dom}{\textup{dom}}
\newextmathcommand{\vars}{\textup{vars}}

\newcommand{\tool}{\texttt}
\newcommand{\yices}{\tool{Yices2}\xspace}
\newcommand{\mathsat}{\tool{MathSAT}\xspace}
\newcommand{\cvc}{\tool{cvc5}\xspace}
\newcommand{\dreal}{\tool{dReal}\xspace}
\newcommand{\ksmt}{\tool{kSMT}\xspace}
\newcommand{\isat}{\tool{iSAT3}\xspace}
\newcommand{\yicestra}{\tool{Yices-TRA}\xspace}
\newcommand{\yicestraDec}{\tool{Yices-TRA\,(dec $\delta$)}\xspace}

\DeclareFontEncoding{LS2}{}{\noaccents@}
  \DeclareFontSubstitution{LS2}{stix}{m}{n}
  \DeclareSymbolFont{stix@largesymbols}{LS2}{stixex}{m}{n}
  \SetSymbolFont{stix@largesymbols}{bold}{LS2}{stixex}{b}{n}
  \DeclareMathDelimiter{\lBrace}{\mathopen} {stix@largesymbols}{"E8}%
                                            {stix@largesymbols}{"0E}
  \DeclareMathDelimiter{\rBrace}{\mathclose}{stix@largesymbols}{"E9}%
                                            {stix@largesymbols}{"0F}

\newextmathcommand{\defeq}{\coloneqq}
\newextmathcommand{\eqdef}{\defeq}

\newcommand{\sub}[2]{\ensuremath{[#1\,/\,#2]}\xspace}

\definecolor{light-gray}{gray}{0.95}
\newcolumntype{g}{>{\columncolor{light-gray}}r}

\algnewcommand\algorithmicndbranchoutput{\textbf{Output of each branch ($\beta$):}}
\algnewcommand\NDBranchOutput{\item[\algorithmicndbranchoutput]}
\algnewcommand\algorithmicglobalspec{\textbf{Ensuring:}}
\algnewcommand\GlobalSpec{\item[\algorithmicglobalspec]}

\newextmathcommand{\Cmap}{\textup{\textsc{C}}}

\newextmathcommand{\SIGN}{\textup{\textsc{Sign}}}

\newextmathcommand{\cn}{\xi} 
\newextmathcommand{\alg}{\alpha}

\newextmathcommand{\dotb}{\mathbin{\scalebox{0.7}{$\bullet$}}}

\newextmathcommand{\height}{\textup{h}}
\newextmathcommand{\size}{\textup{size}}

\newcommand{\tras}{TRAs\xspace}
\newcommand{\tra}{TRA\xspace}
\newcommand{\nta}{\tra}

\newcommand{\nra}{NRA\xspace}

\newcommand{\mcsat}{\text{MCSAT}\xspace}
\newcommand{\dstep}[2]{{\color{#1}\uline{\color{black}{\texttt{#2}}}}}
\newcommand{\processstep}[1]{\dstep{orange!70}{#1}}
\newcommand{\decisionstep}[1]{\dstep{green!70}{#1}}
\newcommand{\outputstep}[1]{\dstep{red!70}{#1}}
\newextmathcommand{\parmod}{\textup{pm}}
\newextmathcommand{\fal}{\textup{fal}}

\newextmathcommand{\width}{\textup{width}}
\newextmathcommand{\io}{\circ}
\newextmathcommand{\allvars}{\textup{allvars}}

\newextmathcommand{\abox}{\mathcal{B}}

\newcommand{\dsat}{$\delta$\texttt{-SAT}\xspace}

\newcommand{\sat}{\texttt{SAT}\xspace}
\newcommand{\unsat}{\texttt{UNSAT}\xspace}

\newcommand{\defas}{\coloneqq}

\begin{document}
\title{MCSAT Modulo Transcendental Arithmetics}
%
%
\author{
    \IEEEauthorblockN{Jorge Gallego-Hernández~\orcid{0009-0002-2240-1107}}
    \IEEEauthorblockA{\textit{IMDEA Software Institute}\\ \textit{Universidad Politécnica de Madrid}\\
    Madrid, Spain}
    \and
    \IEEEauthorblockN{Enrico Lipparini~\orcid{0009-0009-0428-4403}}
    \IEEEauthorblockA{\textit{University of Cagliari}\\
    Cagliari, Italy}
    \and
    \IEEEauthorblockN{Alessio Mansutti~\orcid{0000-0002-1104-7299}}
    \IEEEauthorblockA{\textit{IMDEA Software Institute}\\
    Madrid, Spain}
}
%
%

%
\maketitle              
%


\begin{abstract}
    We propose a framework for solving quantifier-free formulas from (undecidable) extensions of 
    non-linear real arithmetic (NRA)   with transcendental functions, such as exponential and trigonometric ones.
    The framework extends the Model Constructive Satisfiability calculus (\mcsat), and leverages procedures for NRA and methods from real analysis.

    At its core, our procedure abstracts the input formula to NRA, 
    and lets \mbox{MCSAT} and
    an NRA plugin incrementally build a partial model of the abstracted
    formula. 
    A \emph{Transcendental Real Arithmetic plugin}, acting as an intermediary between MCSAT and the
    NRA plugin, ensures the consistency of the partial model
    and is responsible for refining the~abstracted~formula.


    We implemented our procedure in the Yices2 SMT solver for the sine and exponential functions, and conducted an
    extensive empirical evaluation that shows that our prototype outperforms state-of-the-art solvers on both SAT and UNSAT instances.
\end{abstract}

%

\section{Introduction}\label{sec:introduction}

Many industrial applications in computer science and engineering
reduce to deciding formulas over the real numbers that involve transcendental
functions such as the sine and the exponential~\cite{bartocci2018specification,ratschan2012applications}. Yet, deciding such formulas is notoriously difficult, 
and in many cases outright impossible: 
\begin{itemize}
\item The problem is already undecidable for univariate formulas with 
addition, multiplication, and the sine function~\cite{laczkovich2003removal}; 
and only known to be decidable subject to Schanuel's conjecture for the exponential~\cite{macintyre1996decidability}.
\item Even representing and checking solutions is sometimes not possible, in particular for formulas whose only solutions are transcendental numbers, i.e.,~numbers that are not roots of any polynomial with integer coefficients.
\end{itemize}
These facts motivate the design of principled but incomplete 
methods to tackle problems involving transcendental functions on a best-effort basis.
In this paper, we present such a method. 

Specifically, we address the Satisfiability Modulo Theories (SMT) problem for Transcendental Real Arithmetics (\tras), that is,
deciding the satisfiability of quantifier-free formulas from 
first-order theories extending Non-linear Real Arithmetic (\nra) with 
transcendental functions.
Working in 
the Model Constructive Satisfiability (MCSAT) framework 
(which generalizes ideas from CDCL to the theory level), 
we design a procedure applicable to any such theory whose transcendental functions are total and interval-computable. 

At the high level, our procedure abstracts an input \nta formula to \nra, and uses \nra lemmas to incrementally refine the abstraction. 
A \emph{\mbox{Transcendental Real Arithmetic plugin} \mbox{(\tra plugin)}} acts as an intermediary between the core solver of MCSAT and a (already existing) plugin for NRA.
The role of the \tra plugin is that of ensuring 
that the trail of MCSAT is ``consistent'' in the following sense:
for every literal fully assigned in the trail, its truth value in the \nra abstraction matches the truth value in the \tra concretization.
When the \tra plugin detects an inconsistency, it triggers a conflict,
producing a valid
\nra lemma that is conjoined to the NRA abstraction, effectively refining it.

In \tras, evaluating the truth value of a fully-assigned literal is in general 
an undecidable problem.
We adopt the practical strategy of relaxing the consistency check of the \tra plugin 
up to a tolerance $\delta > 0$.
This means that, in case it is not possible to disprove consistency using over-approximations of the transcendental functions that depend on $\delta$,
the \tra plugin considers the check passed, 
and marks the literal as \emph{$\delta$-consistent}.
The benefit of this approach is that \emph{$\delta$-consistency} is decidable for interval-computable functions, and hence the \tra plugin does not get stuck during its consistency check.
 
The procedure can operate in two modes:
\begin{enumerate}
  \item With tolerance $\delta$ fixed throughout the procedure and provided by the user. In this mode, the procedure may return \texttt{SAT}, \texttt{UNSAT} or
  \dsat. The first two cases carry their usual meaning.
  In the case of \dsat, the procedure has found a solution to the \nra abstraction
  for which all literals were proven consistent or $\delta$-consistent by the TRA plugin, with at least one being $\delta$-consistent.
  \item With tolerance not fixed, and the procedure may only return \texttt{SAT} or \texttt{UNSAT}. A tolerance $\delta$ is still used internally, 
  but rather than answering \dsat, the procedure decreases $\delta$ and backtracks 
  to the empty trail, while preserving all NRA lemmas produced so far.
\end{enumerate}
 
While our procedure works for any \tra in which the transcendental functions are interval-computable, 
we further devote special attention to the case of the sine and exponential functions (plus the constant $\pi$), providing more precise NRA lemmas based on Taylor and Padé approximations.

For $\sin$, $\exp$ and $\pi$,
we implemented a prototype\footnote{Available at \url{https://github.com/arith-lab/yices-tra}.} of our procedure within the \yices SMT solver~\cite{yices2}.
We compared the prototype to
\mathsat~\cite{CimattiGIRS18}, \cvc~\cite{kremer2022cooperating} and
\dreal~\cite{gao2013dreal} on the 2\,512 instances
from~\cite{CimattiGIRS18}. Our experimental 
evaluation shows that our tool outperforms  other tools 
on both \sat and \unsat instances (and, when $\delta$ is fixed, also on \dsat instances).

\section{Preliminaries}\label{section:preliminaries}

In this section we introduce the necessary background material on arithmetic theories over the reals, and provide a high-level overview of the MCSAT framework.
We assume familiarity with propositional satisfiability and the
Conflict-Driven Clause Learning (CDCL) algorithm (a brief overview of CDCL
is given in the appendix).

We recall that the \emph{Satisfiability Modulo Theory (SMT) problem} asks for the satisfiability of a quantifier-free formula in conjunctive normal form (CNF) from a given first-order~theory. 

\subsection{Non-linear Real Arithmetic (NRA)}\label{section:nra}
NRA is formally defined as the first-order theory of the structure ${(\R, 0, 1, +, \cdot, <, =)}$.
This means that, in NRA, variables range over real numbers, and wlog.~atomic formulas are strict inequalities $p(\vec x) < 0$ and equalities $p(\vec x) = 0$, where 
the \emph{term} $p(\vec x)$ is a polynomial with integer coefficients. 
We write $p \leq 0$ and $p \neq 0$ as shortcuts for $\lnot (-p < 0)$
and $\lnot (p = 0)$, 
respectively.
Throughout the paper, we only consider quantifier-free formulas, 
given in CNF. 
The satisfiability problem for NRA is decidable, 
and in practice most decision procedures for this theory are based on Collin's 
Cylindrical Algebraic Decomposition (CAD)~\cite{CollinsCAD}.

A fundamental property of NRA is that if a formula has a solution, then it has an \emph{algebraic solution}~\cite[Theorem~2.81]{BasuPR06}, i.e., a solution assigning algebraic numbers to all variables.
A real number is \emph{algebraic} if it is the root of a univariate polynomial with integer coefficients; else, it is \emph{transcendental}.
For instance, $\sqrt{2}$ is algebraic since it is a solution of the equation $x^2 - 2 = 0$, while $\pi$ is transcendental.
We write~$\Alg$ to denote the set of algebraic numbers.
An algebraic number~$\alpha$ can be effectively represented with a triple $(p,\ell,u)$ 
where $p$ is a univariate polynomial of which~$\alpha$ is a root, and $\ell$ and $u$ are rational numbers such that $\alpha$ is the only root of $p$ in the interval $[\ell,u]$. All standard decision procedures for NRA restrict the search of solution to algebraic solutions.
\subsection{Real Arithmetics with Transcendental Functions}\label{section:tras}

A real-valued function $f \colon \R^n \to \R^m$ is called \emph{transcendental} if its graph cannot be characterized by any formula in NRA; that is, there is no formula $\varphi(\vec x, \vec y)$ in NRA such that, for every $\vec r \in \R^n$ and $\vec s \in \R^m$, $\varphi(\vec r, \vec s)$ holds if and only if $f(\vec r) = \vec s$.
\emph{A Transcendental Real Arithmetic (TRA)} is understood as any first-order theory of a structure that expands NRA with (computable) transcendental functions.
Classical examples of TRAs are given by the first-order theories of the structures ${(\R, 0,1, +, \cdot, \sin, <,=)}$ and ${(\R, 0,1, +, \cdot, \exp, <,=)}$, i.e.,~the expansions of NRA by the sine function and the exponential function $\exp(x) \defas e^x$, respectively.

TRAs are very expressive. For example, even the univariate fragment of ${(\R, 0,1, +, \cdot, \sin, <,=)}$ is undecidable~\cite{laczkovich2003removal}, while~${(\R, 0,1, +, \cdot, \exp, <,=)}$ is only known to be decidable subject to Schanuel's conjecture~\cite{macintyre1996decidability}. Even worse, 
solutions are often
\emph{unverifiable}~\cite[Section~1.3]{Bournez24}: 
there is no general algorithm to decide 
if a given variable assignment satisfies a TRA formula.
This stems from the nature of transcendental functions, whose outputs can typically only be computed to arbitrary precision, whereas the formula may be satisfied only at their exact value.
For these reasons, except for procedures targeting very restricted theories (see, e.g.,~\cite{ChenX23,GallegoM25}), all decision procedures for TRAs are incomplete. 
The procedure we describe in the next section is no exception.
More precisely, our procedure is designed to solve extensions of NRA by 
\emph{interval-computable total}
\emph{functions}, 
defined next.



\begin{definition}[Algebraic box]
    An \emph{algebraic box} in $\R^n$ is the Cartesian product $[a_1, b_1] \times \dots \times [a_n,b_n]$ of $n$ closed intervals of finite length, with algebraic points~$a_i \leq b_i$ as extrema.
    It is represented by the $2n$-tuple $(a_1,b_1,\ldots,a_n,b_n)$.
    We write $\abox_n(\Alg)$ for the set of all algebraic boxes in $\R^n$.

    The \emph{width} of an algebraic box~$B$, denoted $\width(B)$, is defined as $\max_{i=1}^n({b_i - a_i})$. 
    We write $\infnorm{B}$ for the maximum of the $\infty$-norm of the points in $B$, 
    i.e., $\max_{i=1}^n \max(|a_i|, |b_i|)$.
\end{definition} 

\begin{definition}[Interval computable function]\label{def:interval-comput}
    A real-valued function $f \colon \R^n \rightarrow \R^m$ is said to be
    \emph{interval computable} if there is an algorithm~$I_f \colon \abox_n(\Alg) \to \abox_m(\Alg)$ with the following
    property: for every~$\delta > 0$ and $\eta > 0$, there is an $\epsilon > 0$ such that, when given in input any box $B \in \abox_n(\Alg)$ with ${0 < \width(B) < \epsilon}$ and $\infnorm{B} < \eta$,
    the algorithm~$I_f$ returns a box~$C \in \abox_m(\Alg)$ satisfying
    $\width(C) < \delta$ and $f(B) \subseteq C$.
\end{definition}

Three observations are in order. 
First, the composition of two interval computable functions is interval computable. 
Second, addition, multiplication, and many transcendental functions of practical interest are interval computable, including the sine and exponential functions.
For these natural functions, the algorithms $I_f$ are effective and already available in several libraries; 
in particular, our tool uses the \tool{ARB} library~\cite{Arb} for arbitrary-precision interval computation of the sine and exponential functions.
Lastly, any computable constant~$c$ (e.g.,~$\pi$ or $e$) can be treated as a unary interval computable function: given a box of some width $\epsilon$, $I_c$ computes an interval of width $\epsilon$ around $c$.
Throughout the paper, we tacitly see transcendental constants as unary functions applied to the constant $1$.

\begin{example}
    Let us see why the exponential function is interval computable.
    Consider an interval $[-\eta,\eta]$, for some $\eta > 0$,
    and a $\delta > 0$. Let $\epsilon = \frac{\delta}{e^\eta}$.
    Consider a box $[a,b] \subseteq [-\eta,\eta]$ of positive width $b - a < \epsilon$.
    Since $\exp$ is continuous and differentiable, by the mean value theorem ${e^b-e^a = e^\xi(b-a)}$ 
    for some $\xi \in [a,b]$. 
    Since $\exp$ is monotone, we then have $e^b-e^a < e^\eta \epsilon < \delta$.
    On input $[a,b]$, a simple algorithm $I_{\exp}$ consists in computing 
    $p,q \in \Q$ such that $e^a < p < q < e^b$. 
\end{example}

\smallskip
\subsubsection*{$\delta$-satisfiability}
Because of undecidability and unverifiability, symbolic computation often offers very limited support in deciding TRAs, and must be complemented by numerical methods. A principled framework for combining these methods with formal soundness guarantees is given by the notion of 
\mbox{$\delta$-satisfiability} procedure from~\cite{gao2012delta}, which we now recall
(in a slightly extended form that still allows \sat results).

Let $\delta > 0$ be a rational (the \emph{tolerance}). Let~$\phi$ be a quantifier-free CNF formula from some expansion of NRA, with atomic formulas of the form $\tau < 0$ 
or $\tau = 0$. 
The \mbox{$\delta$-weakening} of $\phi$, denoted
$\phi^\delta$, is the formula obtained from $\phi$ by replacing the literals in each clause following the rules: 
    $\tau < 0 \, \rightarrow\, \tau < \delta$,
    $\lnot (\tau < 0) \, \rightarrow\, \lnot (\tau < -\delta)$, \ 
    $\tau = 0 \, \rightarrow\,  -\delta \leq \tau \leq \delta$, and
    $\lnot (\tau = 0) \, \rightarrow\, \top$.
A formula $\phi$ is said to be $\delta$-satisfiable if and only if 
its \mbox{$\delta$-weakening} $\phi^\delta$ is satisfiable. 

\begin{definition}[$\delta$-satisfiability procedure]\label{def:delta-sat-proc}
    A \emph{$\delta$-satisfiability procedure} for an expansion of NRA
    is a procedure that,
    if it terminates on input formula $\phi$, 
    returns
    \texttt{SAT}, \texttt{UNSAT},
    or \dsat. 
    The procedure adheres to the following soundness criterion: if it returns \texttt{SAT} 
    (resp.~\texttt{UNSAT} or \dsat), then $\phi$ 
    must be satisfiable
    (resp.~unsatisfiable or $\delta$-satisfiable). 
    When~$\phi$ is both satisfiable (or unsatisfiable) and \mbox{$\delta$-satisfiable}, the
    procedure can return either of the two answers.
\end{definition}

\subsection{Background on MCSAT}\label{sec:mcsat-tra}
We briefly introduce MCSAT, referring the reader to~\cite{de2013model}
and~\cite{jovanovic2013design} for more detailed 
presentations.


The \mcsat framework consists of a \emph{core solver} that orchestrates a set of
\emph{theory plugins}. Each theory plugin implements a procedure for a specific logical theory. The control flow of the core solver, illustrated
in~\Cref{figure:MCSAT}, closely follows that of CDCL. As in CDCL, the central
data structure is the \emph{trail}, which is a list of four different types of
assignments: 
\begin{itemize}
    \item\emph{Boolean decisions.} Assignments $\ell \mapsto b$ where $\ell$ is
    a literal 
    and $b$ is a Boolean. These are added to the trail during the
    \processstep{Decide} step. 
    \item\emph{Semantic decisions.} Assignments $x \mapsto v$ where $x$ is a
    first-order variable over some sort $S$ (e.g., $\R$), and $v$ is an element of $S$.
    These decisions are also added during the \processstep{Decide} step:
    \mcsat calls the theory plugin associated to $S$ (e.g., the NRA plugin), asking a value for
    $x$. 
    \item\emph{Boolean propagations.} Assignments $\ell \mapsto b$ 
    where $\ell$ is a literal and $b$ is a Boolean, that are added during the \processstep{Propagate} step and
    derived by unit propagation.
    \item\emph{Semantic propagations.} Assignments $\ell \mapsto b$ added during
    the \processstep{Propagate} step by a theory plugin. Specifically, when the
    literal~$\ell$ contains variables already assigned in the trail, the theory
    plugin evaluates its truth value~$b$.
\end{itemize}
\begin{figure}[t]
    \begin{center}
        \scalebox{0.8}{%
        \begin{tikzpicture}[node distance=0.5cm]

            \node (propagate) [process] {Propagate};
            \node (entry) [above left = -0.5cm and 0.5cm of propagate] {input};
            \node (trail_consistent) [decision, below = of propagate, align = center] {Trail  consistent?};
            \node (total_model) [decision, below left = 0.3cm and 1cm of trail_consistent, align = center] {Model complete?};
            \node (valid_backtrack) [decision, below right = 0.3cm and 1cm of trail_consistent, align = center] {Decision in trail?};
            \node (decide) [process, below = of total_model] {Decide};
            \node (analyse_and_backtrack) [process, align = center] at (valid_backtrack |- decide) {Analyse conflict\\ and backtrack};
            \node (sat)    [startstop, below = 0.5cm of trail_consistent] {SAT};
            \node (unsat)  [startstop, below = 0.5cm of sat] {UNSAT};

            \draw [arrow] (propagate)--(trail_consistent);

            \draw [arrow] (trail_consistent) -- node[above,xshift=-2.5mm,yshift=-1.5mm] {yes} (total_model);
            \draw [arrow] (trail_consistent) -- node[above,xshift=2.5mm,yshift=-1.2mm] {no}  (valid_backtrack);

            \draw [arrow] (total_model.south east) -- node[below] {yes} (sat);
            \draw [arrow] (total_model) -- node[left] {no} (decide);

            \draw [arrow] (valid_backtrack) -- node[right] {yes} (analyse_and_backtrack);

            \draw [arrow] (analyse_and_backtrack) -- ++(2cm,0) |- ([yshift=-4pt]propagate.east);

            \draw [arrow] (valid_backtrack) -- node[above] {no} (unsat);
            \draw [arrow] (decide.west) -- ++(-0.8cm,0) |- ([yshift=-6pt]propagate.west);

            \draw [double,arrow] (entry) -- (entry -| propagate.west);
        \end{tikzpicture}}
    \end{center}

    \caption{The MCSAT framework.}\label{figure:MCSAT}
\end{figure}%
At each~\processstep{Decide} step, exactly one Boolean or semantic decision is
added to the trail; this decision can only feature variables or literals
that are not yet in the trail. 
The other steps of MCSAT are described below.

\subsubsection*{Step~\processstep{Propagate}}
This step saturates the
trail with all available Boolean and Semantic propagations. Whenever the trail
is updated, \mcsat notifies all plugins of the update. 
The plugins use this
information to refine a \emph{feasibility set} for each variable of their
competence. For example, 
if $x\mapsto 2$ is in the trail, 
and \mcsat notifies the NRA plugin that 
$(x^2-y \leq 0)\mapsto \top$ 
has been added, the plugin
 updates the feasibility set of $y$ to exclude the interval $(-\infty,4]$. The
feasibility sets are used to detect conflicts, as well as to perform semantic decisions.

\smallskip
\subsubsection*{Step~\decisionstep{Consistency}}
A trail is \emph{propositionally consistent} if it does not contain both $\ell
\mapsto \top$ and $\ell \mapsto \bot$, for any literal~$\ell$.
As semantics decisions do not involve variables 
in the trail, a propositionally consistent trail $T$ can be viewed as a map (instead of a list), with $T(t) = v$ if and only if $t \mapsto v$ occurs in~$T$. 
We write $\dom(T)$ for the domain of this map.
A literal is said to be \emph{fully-assigned} in $T$ 
if $\vars(\ell) \subseteq \dom(T)$, where $\vars(\ell)$ is 
the set of first-order variables occurring in~$\ell$.
For a fully-assigned literal $\ell$, we write $T \models \ell$ 
when the assignments of $\vars(\ell)$ in $T$ satisfy the literal.
The notation $T \models \psi$ extends to Boolean combinations~$\psi$ of fully-assigned
literals. 

After \processstep{Propagate}, \mcsat checks the \decisionstep{Consistency} of the trail~$T$ by verifying that it is propositionally consistent, and 
by asking the plugins to check (1)~that every unassigned variable has a non-empty feasibility
set, and (2)~that the value of fully-assigned literals in the trail 
is consistent with their satisfaction. Specifically, 
for each fully-assigned literal $\ell$ with $\ell \mapsto b$ in $T$,
it must hold that $T \models \ell$ if and only if $b = \top$. 

\smallskip
\subsubsection*{Step~\decisionstep{Completeness}}
If the \decisionstep{Consistency} check passes, \mcsat verifies~\decisionstep{Completeness} of the trail, i.e.,
each clause in the input CNF formula 
contains a fully-assigned literal $\ell$ 
such that $\ell \mapsto \top$ or $(\lnot \ell) \mapsto \bot$
is present in the trail
(where $\lnot \ell$ is understood as $\ell'$ if $\ell = \lnot \ell'$). 
If this is the case, \mcsat returns~\outputstep{SAT} since the trail satisfies all clauses in the input formula;
otherwise it moves to the~\processstep{Decide} step.


\smallskip
\subsubsection*{Step~\processstep{Analyse conflict and backtrack}}
When the~\decisionstep{Consistency} check fails, if the trail does not contain a
decision, then the formula is~\outputstep{UNSAT}. Otherwise, \mcsat must
backtrack and learn a new clause $\psi$ that
is implied by the input formula $\phi$ and explains the conflict.
If the conflict stems from Boolean reasoning (e.g., it is due to the Boolean
propagations performed after a Boolean decision), \mcsat behaves as CDCL. 
Otherwise, it invokes the theory plugins. The plugins analyse the trail, and
generate a suitable clause~$\psi$ explaining the conflict, along with the
semantic or Boolean decision causing it. \mcsat conjoins $\psi$ to~$\phi$, and
then backtracks to before the problematic decision.
When backtracking on a semantic decision $x \mapsto v$, \mcsat does not 
necessarily decide $x$ first upon reaching the~\processstep{Decide} step.

\begin{example}
    Consider the following CNF formula of NRA:
    \begin{align*}
        (x^2 = y) &\land (y = -1 \lor y = 2) \land (\lnot(y < 1) \lor \lnot (x < 1)).
    \end{align*}
    Let us discuss a possible run of \mcsat on this formula. The
    initial~\processstep{Propagate} step pushes to the trail~$(x^2 = y) \mapsto
    \top$, as the literal $x^2 = y$ belongs to a unit clause. Next, the control
    flow reaches the~\processstep{Decide} step. Suppose \mcsat 
    performs a Boolean decision, adding $(y = -1) \mapsto \top$ to the trail.
    
    The algorithm now performs the \decisionstep{Consistency} check, which
    fails as the literals $x^2 = y$ and $y = -1$ caused the
    feasibility set of $x$ to become empty. The plugin
    may learn the (tautological) clause $(x^2 \neq y \lor y \geq 0)$, and return the decision $(y = -1) \mapsto \top$ 
    as the cause of the conflict, together with the clause. 
    \mcsat backtracks, updating the trail to 
    $(x^2 = y) \mapsto \top, (y = -1) \mapsto \bot$.
    Unit propagation adds the assignment $(y=2) \mapsto \top$.
    This causes the feasibility set of $y$ to become the singleton $\{2\}$, which in turn causes the feasibility set of $x$ 
    to become $\{\sqrt{2}\}$. 
    Applying~\processstep{Decide} 
    twice adds $y \mapsto 2$ and $x \mapsto \sqrt{2}$ 
    to the trail, making it consistent and complete. 
    \mcsat returns~\outputstep{SAT}. 
\end{example}

\section{TRAs in MCSAT}\label{sec:axioms-tra}\label{sec:tra-plugin}

We describe our MCSAT-based procedure for the first-order theory of a structure 
$(\R, 0, 1, +, \cdot, f_1,\dots,f_k, <, =)$ extending NRA with $k$ interval computable transcendental functions ${f_j \colon \R^{n_j} \to \R}$.
The only assumption we make is to have access, for each $f_j$, to an algorithm $I_{f_j}$ as described in~\Cref{def:interval-comput}.
This general procedure is later tailored to the extension of NRA by the exponential and sine functions in \Cref{section:special-treatment}.

\subsection{High-level view of the procedure}\label{subsec:high-level-view}

As illustrated in \Cref{figure:TRA-plugin},
our procedure starts with a preprocessing phase that constructs 
from the input formula~$\phi$ an abstraction $\phi^\alpha$ in NRA (defined in~\Cref{subsec:first-abstraction}). 
The MCSAT framework is then executed on $\phi^\alpha$, 
with (the core solver of) MCSAT querying the NRA plugin.
The main component of our procedure is given by a \emph{Transcendental Real Arithmetic plugin (TRA plugin)} that  
acts as an intermediary between the core solver and its NRA plugin. 
The TRA plugin has access to both the formula $\phi$ and the abstraction~$\phi^\alpha$ (as well as information to connect the two formulas, discussed~later).

In a nutshell, the TRA plugin is limited to observing the communication between
MCSAT and the NRA plugin, except for when the consistency of the trail must be
checked or conflicts must be analysed.
For instance, if the NRA plugin determines that the trail is consistent, the TRA plugin runs its own
consistency check, but over constraints from~$\phi$ rather than $\phi^{\alpha}$. If the check fails, the TRA plugin proposes
clauses that are implied by~$\phi$ but not by~$\phi^\alpha$, effectively refining
$\phi^\alpha$ enough to trigger a conflict in the NRA~plugin.

\begin{figure}[t]
    \begin{center}
        \scalebox{0.8}{%
        \begin{tikzpicture}[node distance=0.5cm]

            \node (preprocessor) [process,fill=cyan!20] {Preprocessor};
            \node (entry) [above left = -0.5cm and 0.5cm of preprocessor] {input: $\phi$};
            \node (tra-plugin) [process, below = 0.8cm of preprocessor, align = center,fill=cyan!20] {TRA plugin};
            \node (nra-plugin) [process, right = 0.3cm and 1cm of tra-plugin, align = center,fill=cyan!20] {NRA plugin};
            \node (mcsat) [process, left = 0.3cm and 1cm of tra-plugin, align = center,fill=cyan!20] {MCSAT\\[-3pt]\small{(core solver)}};

            \draw [arrow] ([xshift=3mm]preprocessor.south)-- node[right] {$(\phi,\phi^\alpha)$}([xshift=3mm]tra-plugin.north);
            \draw [arrow] ([xshift=-3mm]preprocessor.south) -- ++(0,-0.2) -| node[above, xshift=15mm] {$\phi^\alpha$} (mcsat);
            \draw [double,arrow] (entry) -- (entry -| preprocessor.west);

            \draw [arrow] ([yshift=2mm]tra-plugin.east) -- ([yshift=2mm]nra-plugin.west);
            \draw [arrow] ([yshift=-2mm]nra-plugin.west) -- ([yshift=-2mm]tra-plugin.east);

            \draw [arrow] ([yshift=2mm]mcsat.east) -- ([yshift=2mm]tra-plugin.west);
            \draw [arrow] ([yshift=-2mm]tra-plugin.west) -- ([yshift=-2mm]mcsat.east);

        \end{tikzpicture}}
    \end{center}

    \caption{Schematic view of our procedure.}\label{figure:TRA-plugin}
\end{figure}

Apart from assuming that MCSAT performs the loop described in the previous section, 
our procedure relies on the following assumptions on the NRA plugin. 
The existence of a plugin satisfying them is guaranteed by the decidability of NRA, 
and the NRA plugin of \yices, which implements the CAD-based procedure from~\cite{jovanovic2013solving}, is a concrete instance.

\begin{assumption}\label{assumption:nra}
    We assume that MCSAT with the NRA plugin is sound, complete and terminating on all instances of NRA.
    We moreover assume: 
    (1) When deciding on a variable, the NRA plugin always returns an algebraic number.
    (2) The NRA plugin can check the consistency of any fully-assigned literal in the trail, when all variables are assigned algebraic numbers.
\end{assumption}

\smallskip
At its core, our procedure respects the specification of \mbox{$\delta$-satisfiability} procedure (\Cref{def:delta-sat-proc}), 
but can also run in a standard
\texttt{SAT}/\texttt{UNSAT} mode (as explained in~\Cref{sec:-delta-vs-standard}).

In practical terms, literals in the trail can 
 take three values: 
$\top$, $\bot$ and $\top^\delta$. 
Except for the TRA plugin,
all other theory plugins treat $\top^\delta$ as $\top$. 
The TRA plugin can weaken literals in the trail, replacing $\top$ with $\top^\delta$ while performing its consistency check (explained in~\Cref{subsection:consistency}).
Roughly speaking, an assignment ${\ell \mapsto \top^\delta}$ in the trail should be understood as ``$\ell$ is a fully-assigned literal in the trail that is true with respect to the abstraction, but whose satisfaction in the original formula is only guaranteed for its $\delta$-weakening''.

In the core solver of MCSAT, the only difference concerns the \decisionstep{Completeness} check. When this check passes (seeing $\top^\delta$ as $\top$), but
a clause in the formula is only true because of a literal with value $\top^\delta$,
MCSAT returns \dsat instead of~\texttt{SAT}.


\subsection{Preprocessing: the first NRA abstraction}\label{subsec:first-abstraction}

We discuss how the first abstraction 
$\phi^\alpha$ is built.
We tag each clause in the input formula $\phi$ as \texttt{original}, 
so that the TRA plugin can distinguish literals from these clauses among those in $\phi^\alpha$. 
These are referred to as \texttt{original} clauses and literals.
Afterwards, we iteratively consider all terms $f(t_1,\dots,t_r)$ 
in $\phi$ starting from the outermost ones, where $f$ is a transcendental function, and perform the rewriting step
\begin{equation*}
    \phi \longrightarrow \phi\sub{f_{(t_1,\dots,t_r)}^{\text{out}}}{f(t_1,\dots,t_r)} \land \textstyle\bigwedge_{i=1}^r f_{t_i}^{\text{in}} = t_i\,,
\end{equation*}
where each $f_{t_i}^{\text{in}}$ and $f_{(t_1,\dots,t_r)}^{\text{out}}$ are fresh real-valued variables, and~$\phi\sub{t'}{t''}$ stands for the formula obtained from
$\phi$ by replacing every occurrence of $t''$ with $t'$. 
This rewriting step is applied until no transcendental function is left in the formula, and the resulting (NRA) formula is $\phi^\alpha$.

The TRA plugin is initialised by providing the formulas $\phi$ and $\phi^\alpha$, as well as maps to reconstruct terms $f(t_1,\dots,t_r)$ from the variables $f_{t_i}^{\text{in}}$ and $f_{(t_1,\dots,t_r)}^{\text{out}}$. Given a literal $\ell$
(resp.~term $\tau$) occurring in $\phi^\alpha$, we write $\ell^\gamma$
(resp.~$\tau^\gamma$) for the literal (resp.~term) obtained by replacing every
$f_{(t_1,\dots,t_r)}^{\text{out}}$ with $f(t_1,\dots,t_r)$ and every $f_{t_i}^{\text{in}}$ with~$t_i$. When the
literal~$\ell$ is \texttt{original}, we write~$\allvars(\ell)$ for all
the variables in~$\ell^\gamma$, plus all variables added when
constructing~$\ell$ from $\ell^\gamma$.

\begin{example}
    Consider $\phi \coloneqq \sin(\cos(x)) + \cos(x)^2 < 0$.
    The NRA abstraction $\phi^\alpha$ is defined as
    $$\sin_{\cos(x)}^{\text{out}} + (\cos_{x}^{\text{out}})^2 < 0 
    \land \sin_{\cos(x)}^{\text{in}} = \cos_{x}^{\text{out}} \land \cos_{x}^{\text{in}} = x.$$
    The (unit) clause $\ell \coloneqq \sin_{\cos(x)}^{\text{out}} + (\cos_{x}^{\text{out}})^2 < 0$ 
    is the only one tagged as \texttt{original}. The set of variables
    $\allvars(\ell)$ is defined as $\{x,\,\sin_{\cos(x)}^{\text{in}},\,\sin_{\cos(x)}^{\text{out}},\,\cos_{x}^{\text{in}},\,\cos_{x}^{\text{out}}\}$. 
\end{example}

The following lemma, whose proof is immediate, 
states fundamental properties of $\phi^\alpha$. 
The TRA plugin preserves these properties as invariants
throughout the procedure, while the NRA abstraction $\phi^\alpha$ is refined.

\begin{restatable}{lemma}{LemmaAbstractionSound}\label{lemma:abstractionsound}
    The formula $\phi^\alpha$ satisfies the following properties.
    (1)~Every variable in $\phi$ occurs in $\phi^\alpha$.
    (2)~Every solution to~$\phi$ can be extended into a solution to~$\phi^\alpha$.
\end{restatable}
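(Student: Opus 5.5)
The plan is to argue by induction on the number of rewriting steps performed when building $\phi^\alpha$. Write $\phi = \phi_0 \longrightarrow \phi_1 \longrightarrow \dots \longrightarrow \phi_m = \phi^\alpha$ for the sequence of formulas produced by the preprocessing. I will show that both properties hold between consecutive formulas, with (2) strengthened to: every solution of $\phi_i$ extends to a solution of $\phi_{i+1}$ that agrees with it on the variables of $\phi_i$. Both properties are transitive: variable containment is transitive, and a chain of extensions is again an extension. So the lemma follows once the single-step case is established.

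For property (1), consider the step that replaces every occurrence of $f(t_1,\dots,t_r)$ by $f^{\text{out}}_{(t_1,\dots,t_r)}$. This may remove the variables occurring in the $t_i$ from the main body of the formula. However, the step also conjoins the equalities $f^{\text{in}}_{t_i} = t_i$, and these retain every variable of every $t_i$. No other variable is touched. Hence $\vars(\phi_i) \subseteq \vars(\phi_{i+1})$.

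For property (2), take a solution $\nu$ of $\phi_i$. Extend it to $\nu'$ by setting $\nu'(f^{\text{in}}_{t_i}) \defeq \nu(t_i)$ and $\nu'(f^{\text{out}}_{(t_1,\dots,t_r)}) \defeq f(\nu(t_1),\dots,\nu(t_r))$. Since $f$ is total, the latter value is a well-defined real. The conjuncts $f^{\text{in}}_{t_i} = t_i$ hold by construction. Each original literal is evaluated under $\nu'$ after substituting the fresh variable for $f(t_1,\dots,t_r)$. By a routine induction on terms, the substituted term evaluates under $\nu'$ to the same real as the original term does under $\nu$. Every literal therefore keeps its truth value, so every clause remains satisfied.

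The only point that needs care is well-definedness. The fresh variables are indexed by the term itself, so all occurrences of $f(t_1,\dots,t_r)$ are replaced by a single variable that receives a single value, and the same holds for the input variables. Since the variables are fresh, they do not occur in $\phi_i$, so the extension is consistent with $\nu$. I expect no step to be a real obstacle. The one subtlety is nested occurrences: rewriting outermost terms first means the $t_i$ in the new equalities may themselves still contain transcendental terms. These are handled by later steps, and the inductive argument above covers them unchanged.
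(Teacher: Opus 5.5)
Your proof is correct and follows essentially the same route as the paper: induction on the rewriting steps, with the new equalities $f^{\text{in}}_{t_i} = t_i$ holding by construction, and each substituted atom keeping its truth value because the fresh out-variable is assigned the value of the term it replaces. The only difference is that the paper fixes the canonical extension $\nu^{\text{ext}}$ once and for all and shows that this single map satisfies every intermediate formula, which is the stronger form reused later in Invariant~($\ast$); your step-by-step extension coincides with $\nu^{\text{ext}}$ (by an easy induction), but you would need to say so if you wanted that stronger statement.
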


After constructing $\phi^\alpha$ and initializing the TRA plugin, MCSAT runs on $\phi^\alpha$ as described in~\Cref{sec:mcsat-tra}.

\begin{remark}\label{remark:initial-propagation}
    MCSAT starts with a \processstep{Propagate} step, 
    during which unit propagation adds to the trail 
    all equalities $f_{t}^{in} = t$ that arise 
    when rewriting $\phi$ into $\phi^{\alpha}$. 
    These equalities remain in the trail throughout 
    the execution of MCSAT.
\end{remark}

We now detail the role of the TRA plugin at each step of the MCSAT framework.





\subsection{Steps: \texorpdfstring{\processstep{Decide}}{Decide} and \texorpdfstring{\processstep{Propagate}}{Propagate}}
In these two steps, the role of the TRA plugin is limited to forwarding data
from MCSAT to the NRA plugin, and returning the results computed by the latter
back to MCSAT. Note that, by~\Cref{assumption:nra}, this implies that all 
semantic decisions assign algebraic values to variables.


\subsection{Step: \texorpdfstring{\decisionstep{Consistency}}{Consistency}}\label{subsection:consistency}

Recall that during this step, the theory plugins verify (1) that the
feasibility sets they maintain for the unassigned variables are non-empty, 
and (2) that the values of fully-assigned literals in the trail 
are consistent with their satisfaction.

The TRA plugin begins this step by forwarding the request to the NRA
plugin, which perform its own consistency check. Should the trail be found 
inconsistent at this stage, the NRA plugin raises a conflict that the TRA
plugin propagates back to MCSAT. Otherwise, the TRA plugin proceeds with 
an additional consistency check.

The additional check does not concern feasibility sets: 
the TRA plugin relies on the feasibility sets of the NRA plugin, which, by
definition of $\phi^\alpha$, are overapproximations of the possible values that
variables can take. While it is sometimes possible to refine these
approximations beyond those of the NRA plugin (see, e.g.,~\cite{ChenX23}), the
undecidability status of even the univariate extension of NRA by the
sine function~\cite{laczkovich2003removal} (which is interval computable) prevents the determination of exact feasibility sets in general.

The consistency check of the TRA plugin is therefore focused solely on 
the second of the points above: verifying that the values of fully-assigned literals in the trail are consistent with their satisfaction. 
Because this verification problem is undecidable (as explained in~\Cref{section:tras}), 
we perform it only ``up to $\delta$''. We call this the \emph{TRA $\delta$-consistency check}:

\begin{definition}\label{def:delta-consistency}
    A \emph{TRA $\delta$-consistency} check is a procedure that given an \texttt{original} literal 
    $\ell$ and  a trail $T$ such that $\allvars(\ell) \subseteq \dom(T)$ returns $\top$, $\bot$ or $\top^\delta$ 
    following the criterion: 
    if it returns $\top$, then $T \models \ell^\gamma$, 
    if it returns $\bot$, then $T \not\models \ell^\gamma$, 
    and if it returns $\top^\delta$, then $T \models (\ell^\gamma)^\delta$.
\end{definition}

\begin{figure}
\begin{algorithm}[H]
    \caption{TRA $\delta$-consistency check.}\label{pseudocode:consistency}
    \begin{algorithmic}[1]
        \Fixed A tolerance $\delta > 0$.
        \Require\begin{minipage}[t]{0.94\linewidth} 
            $(\ell \colon \text{literal}, T \colon \text{trail})$ 
            such that $\ell$ is from an \texttt{original} clause of $\phi^\alpha$, and~$\allvars(\ell) \subseteq \dom(T)$.
            \smallskip
            \end{minipage}
        \Ensure $b \in \{\top,\bot,\top^\delta\}$ such that~if $b = \top$ then $T \models \ell^\gamma$, 
        if $b = \bot$ then $T \not\models \ell^\gamma$, and if $b = \top^\delta$ then $T \models (\ell^\gamma)^\delta$.
        \medskip
        \State \textbf{if} $\ell^\gamma$ contains no transcendental function \textbf{then} \textbf{return} $T(\ell)$
        \Let $\tau \sim 0$  be the atomic formula in $\ell$, where $\sim\, \in \{<,=\}$\label{pseu:cons:tau} 
        \EndLet
        \State \textbf{let} $[i,j] \in \abox_1(\Alg)$ s.t.~$0 < j-i < \delta$ and~$T \models i \leq \tau^\gamma \leq j$\label{pseu:cons:K} 
        \Statex \Comment{$[i,j]$ computed bottom-up by building boxes for all subterms of $\tau^\gamma$, using the algorithms from~\Cref{def:interval-comput}} 
        \State \textbf{return} $b \in \{\top,\bot,\top^\delta\}$ computed using the table:\label{pseu:cons:table}\vspace{3pt}
        \Statex \hfill\scalebox{0.9}{\begin{tabular}{l@{\hspace{0.7em}}c@{\hspace{0.7em}}c@{\hspace{0.7em}}c@{\hspace{0.7em}}c@{\hspace{0.7em}}c}
        \scalebox{0.8}{\diagbox{$\ell$}{$[i,j]$}} & $j < 0,$ & $i > 0,$ & $i = j = 0,$ & $0 = i < j,$ & $i < 0 \leq j$\\\hline\noalign{\vskip 3pt}
        $\tau<0$ & $\top$ &  $\bot$ & $\bot$ & $\bot\phantom{^\delta}$ & $\top^\delta$ \\
        $\tau = 0$ & $\bot$ & $\bot$ & $\top$ & $\top^\delta$ & $\top^\delta$\\
        $\lnot(\tau < 0)$ & $\bot$ & $\top$ & $\top$ & $\top\phantom{^\delta}$ & $\top^\delta$\\
        $\lnot(\tau = 0)$ & $\top$ & $\top$ & $\bot$ & $\top^\delta$ & $\top^\delta$\\
        \end{tabular}}\hfill\mbox{}\label{pseudo:returntable}
    \end{algorithmic}
\end{algorithm}
\end{figure}

\Cref{pseudocode:consistency}, which we discuss in a moment, implements the TRA $\delta$-consistency check. Following~\Cref{def:delta-consistency}, we see that 
there are two cases in which the TRA plugin will not perform this check 
on a literal $\ell$ occurring in the trail:
\begin{itemize}
    \item \textit{When $\ell$ is not {\rm
    \texttt{original}}.} The decision is pragmatic: 
    clauses will only be added to improve the NRA abstraction and therefore 
    soundness is still guaranteed.

    \item \textit{If some variable in the set $\allvars(\ell)$ is unassigned in the trail.} 
        In this case, the TRA
        plugin places $\ell$ in a \emph{watchlist} and defers its consistency
        check until all variables in $\allvars(\ell)$ are assigned
        (note: this must occur for the \decisionstep{Completeness} check to pass). 
\end{itemize}

Consider thus an assignment $\ell \mapsto b'$ in the trail~$T$, such that $\ell$ is \texttt{original}, all variables in $\allvars(\ell)$ are assigned in~$T$, and that passes the consistency check of the NRA plugin, i.e., we have $T \models \ell$ if and only if $b' = \top$.
\Cref{pseudocode:consistency} works as follows. 
First, if $\ell^\gamma$ contains no transcendental function, it returns $b'$. Indeed, in this case $\ell = \ell^\gamma$, and the NRA plugin has already verified $\ell \mapsto b'$. Otherwise, let $\tau$ be the term occurring in $\ell$ (as in line~\ref{pseu:cons:tau}). 
Since interval computable functions are closed under composition, $\tau^\gamma$ is interval computable.
The algorithm computes an algebraic box $[i,j] \in \abox_1(\Alg)$ of width at most $\delta$, such that $T \models i \leq \tau^\gamma \leq j$.
This interval is computed bottom-up, computing boxes $I(\rho)$ for every subterm~$\rho$ of $\tau^\gamma$, starting from the innermost subterms.
Interval computability ensures that by choosing sufficiently small intervals
around the values of variables in $\tau^\gamma$ and the constants $0$ and $1$, the final interval $I(\tau^\gamma) = [i,j]$ will have width less than $\delta$. \Cref{pseudocode:consistency} terminates by performing interval analysis, 
following the table in line~\ref{pseu:cons:table}. 
Note that $\top^\delta$ is only returned in cases where $i$ and $j$ are neither both positive, both negative, nor both zero.

\begin{restatable}{lemma}{TheoremConsistencySpecification}\label{lemma:algo1-is-delta-consistency-check}
    \Cref{pseudocode:consistency} respects its specification. 
\end{restatable}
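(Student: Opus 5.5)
The plan is to check the three guarantees of \Cref{def:delta-consistency} separately for each branch of \Cref{pseudocode:consistency}. There are two kinds of exit: the early return in line~1, and the table lookup in line~\ref{pseu:cons:table}. Termination also has to be justified, mainly because line~\ref{pseu:cons:K} must actually produce a suitable interval.

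\textbf{Early return.} Suppose $\ell^\gamma$ contains no transcendental function. Then $\ell = \ell^\gamma$, so $\allvars(\ell) = \vars(\ell)$ and $\ell$ is fully assigned in $T$. By the standing hypothesis, the trail has already passed the NRA plugin's consistency check (\Cref{assumption:nra}(2)), so $T \models \ell$ if and only if $T(\ell) = \top$. Returning $T(\ell)$ therefore satisfies the specification.

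\textbf{Termination of line~\ref{pseu:cons:K}.} For the interval computation, I proceed by structural induction on subterms $\rho$ of $\tau^\gamma$. The inductive claim is that the bottom-up procedure yields an algebraic box $I(\rho)$ containing the value of $\rho$ under $T$. The base cases are variables, where all values in $T$ are algebraic by \Cref{assumption:nra}(1), and the constants $0,1$; for these we take small algebraic boxes around the value. The inductive step uses $+$, $\cdot$ and each $I_{f_j}$, which preserve the enclosure because $f(B) \subseteq C$.

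To get width below $\delta$, I use that the composite map $\tau^\gamma$ is interval computable, since interval computable functions are closed under composition. Fix $\eta$ larger than the norm of the current assignment plus one. \Cref{def:interval-comput} then yields an $\epsilon$ such that input boxes of positive width less than $\epsilon$ produce an output of width less than $\delta$. One also needs $j-i>0$; if the output is degenerate, it can be widened slightly while keeping its width below $\delta$. The step I expect to be the main obstacle is precisely this one: making rigorous that the bottom-up evaluation coincides with the abstract algorithm for the composition.

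\textbf{Table lookup.} Fix the computed $[i,j]$ with $i \le v \le j$, where $v$ is the value of $\tau^\gamma$ under $T$. I go through the table column by column.

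In the column $j<0$ we have $v<0$. Hence $\tau<0$ and $\lnot(\tau=0)$ are true, while $\tau=0$ and $\lnot(\tau<0)$ are false, matching the $\top/\bot$ entries.

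In the column $i>0$ we have $v>0$, and the entries follow symmetrically.

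In the column $i=j=0$ we have $v=0$, and the entries are immediate.

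In the column $0=i<j$ we have $v \ge 0$. So $\tau<0$ is false and $\lnot(\tau<0)$ is true, as the table says. For $\tau=0$, returning $\top^\delta$ requires $-\delta \le v \le \delta$, which holds since $v \le j < \delta$. For $\lnot(\tau=0)$, the weakening is $\top$, so it holds trivially.

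In the column $i<0\le j$ every entry is $\top^\delta$, so it remains to check the weakenings, using $|v| \le j-i < \delta$:
\begin{itemize}
\item $\tau<\delta$ holds, since $v \le j < \delta$;
\item $\lnot(\tau<-\delta)$ holds, since $v \ge i > -\delta$;
\item $-\delta \le \tau \le \delta$ holds by the bound on $|v|$;
\item the weakening of $\lnot(\tau=0)$ is $\top$.
\end{itemize}

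This covers all cases, since the five columns partition the possible configurations of $i \le j$ with $j-i>0$, together with the degenerate case $i=j=0$.
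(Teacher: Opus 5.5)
Your proposal is correct and follows the same route as the paper's proof. The early return is justified by the NRA plugin having already checked $\ell=\ell^\gamma$, and the main case is the same column-by-column reading of the table, using only $v\in[i,j]$ and $j-i<\delta$; you write out all four rows, where the paper treats $\tau<0$ and says the others are similar. The step you flag as the main obstacle, that line~\ref{pseu:cons:K} actually yields a box of width below $\delta$, is not argued in the paper's proof either: there it is taken from \Cref{aux-lemma:composition}, which computes the enclosures bottom-up for $\epsilon=1,\frac12,\frac14,\dots$ and stops once the width drops below $\delta$, relying on closure of interval computability under composition just as you do.
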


Returning to the assignment $\ell \mapsto b'$, 
let $b$ denote the output of 
\Cref{pseudocode:consistency} on input $\ell$ and $T$. 
After performing the TRA \mbox{$\delta$-consistency} check, 
the TRA plugin completes its consistency step on $\ell$ 
as follows: 
\begin{algorithmic}[1]
    \If{$b \in \{\top,\bot\}$ and $b \neq b'$}
        \textbf{raise} \texttt{conflict}
    \ElsIf{$b = \top^\delta$ and $b' = \top$}
        \State update $T$: replace $\ell \mapsto \top$ with $\ell \mapsto \top^\delta$
    \EndIf
\end{algorithmic}

\begin{remark}\label{remark:delta-consistency}
    If the above snippet does not raise a \texttt{conflict}, the assignment $\ell \mapsto b''$ (where $b''$ is $b'$ or $\top^\delta$) satisfies the following properties:
    \begin{itemize}
        \item If $b'' = \top$, then $T \models \ell$ and $T \models \ell^\gamma$.
        \item If $b'' = \top^\delta$, then $T \models \ell$ and $T \models (\ell^\gamma)^\delta$.
        \item If $b'' = \bot$, then $T \not\models \ell$.
    \end{itemize}
    Note that this matches the description of $\top^\delta$ given in~\Cref{subsec:high-level-view}: the literal $\ell$ from $\phi^\alpha$ is true,
    while the literal~$\ell^\gamma$ from the original TRA formula $\phi$ need not be true, only its $\delta$-weakening is guaranteed to be.

    One might wonder about the asymmetry between $\top$ and~$\bot$, 
    and whether a more refined treatment of $\bot$ could be given by 
    introducing a value $\bot^\delta$ with semantics 
    ``\,$T \not\models \ell$ and ${T \models (\ell^\gamma)^\delta}$\,''. 
    The short answer is that, since MCSAT is based on propositional logic, 
    introducing $\bot^\delta$ alone does not yield a better procedure: 
    the solver would still need to treat it as $\bot$, except during 
    the~\decisionstep{Completeness} check. At that stage, however,
    no clause can feature only literals assigned to $\bot$ and $\bot^\delta$, 
    since such a situation would have triggered a conflict at the Boolean 
    level during the \decisionstep{Consistency} check.
\end{remark}

%

\subsection{Step: \texorpdfstring{\processstep{Analyse conflict and backtrack}}{Analyse conflict and backtrack}}
\begin{figure}
\begin{algorithm}[H]
    \caption{TRA refinement.}\label{pseudocode:refinement}
    \begin{algorithmic}[1]
        \Require \begin{minipage}[t]{0.94\linewidth} 
                $(\ell \colon \text{literal}, T \colon \text{trail})$ such that $\ell$ is from an \texttt{original} clause of $\phi^\alpha$, 
                $\allvars(\ell) \subseteq \dom(T)$, 
                and \Cref{pseudocode:consistency} on input $(\ell,T)$ returns $b \in \{\top,\bot\}$ with $b \neq T(\ell)$.
                \smallskip
            \end{minipage}
        \Ensure A formula $\psi$ in NRA such that $T \not\models \psi$
        and every solution to $\phi$ can be extended to a solution of $\phi^\alpha\land \psi$.
        \medskip
        \Let $\tau \sim 0$ with $\sim\, \in \{<,=\}$ be the atomic formula in $\ell$\label{pseu:refinement:tau}
        \EndLet
        \State For each subterm $\rho$ of $\tau^\gamma$ ($\tau^\gamma$ included)
        recompute~the~box 
        \Statex \hfill$[i_\rho,j_\rho] \in \abox_1(\Alg)$ used in~\Cref{pseudocode:consistency} 
        to compute $[i,j]$\label{refinement:Isubterms}
        \State $\psi \gets \top$
        \For{$\eta \coloneqq f(\vec t)$ subterm of $\tau^\gamma$ with~$f$ transcendental}\label{pseu:refinement-loop}
            \State \textbf{let} $\mathcal{T}$ be the set of terms in the vector $\vec t$
            \State $\psi \gets \psi \land \big((\bigwedge_{\rho \in \mathcal{T}} i_{\rho} \leq f_{\rho}^{\text{in}} \leq j_{\rho}) \Rightarrow i_\eta \leq f_{\vec t}^\text{out} \leq j_\eta \big)$\label{refinement:psi}
        \EndFor
        \State \textbf{return} $\psi$\label{refinement:return}
    \end{algorithmic}
\end{algorithm}
\end{figure}

Suppose the TRA plugin raises a conflict, 
causing MCSAT to enter the~\processstep{Analyse conflict and backtrack} step 
(when the trail to contain a decision) 
and query the TRA plugin in return.
If the conflict originates from the NRA plugin, the
TRA plugin lets the NRA plugin resolve it, again acting as an intermediary.
Otherwise, the conflict arose following the TRA \mbox{$\delta$-consistency} check. 
In particular, there is an assignment ${\ell \mapsto b'}$ in 
the trail $T$ with the following properties: 
\begin{itemize}
    \item $\ell$ is \texttt{original}, and $\allvars(\ell) \subseteq \dom(T)$.
    \item $T \models \ell$, but $T \not\models \ell^\gamma$. 
    More precisely,~\Cref{pseudocode:consistency} returned $b \in \{\bot,\top\}$ 
    on input $\ell$ and $T$, with $b \neq b'$.
\end{itemize}
The TRA plugin uses $\ell$ in order to refine $\phi^\alpha$ so that it continues to satisfy the properties in~\Cref{lemma:abstractionsound} while also triggering a conflict in the NRA plugin.

The refinement consists in conjoining to $\phi^\alpha$ the formula $\psi$ computed by~\Cref{pseudocode:refinement}. 
In a nutshell, this algorithm considers all subterms $f(t_1,\dots,t_r)$ of $\ell^\gamma$ 
where $f$ is a transcendental function, and adds (line~\ref{refinement:psi}) entailments 
stating that if the value assigned to $f_{t_j}^{\text{in}}$ lies in $I(t_j)$ for every $j \in [1..r]$, then the value assigned to $f_{(t_1,\dots,t_r)}^{\text{out}}$ must lie in $I(f(t_1,\dots,t_r))$, where $I(\rho)$ denotes the interval computed for subterm $\rho$ during the execution of~\Cref{pseudocode:consistency} on $\ell$ and $T$.

\begin{restatable}{lemma}{TheoremRefinementSpecification}\label{lemma:algo2-is-refinement}
    \Cref{pseudocode:refinement} 
    respects its specification. 
\end{restatable}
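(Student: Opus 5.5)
We must establish the two guarantees in the \textbf{Ensure} clause of \Cref{pseudocode:refinement}: (a) every solution to $\phi$ extends to a solution of $\phi^\alpha\land\psi$, and (b) $T\not\models\psi$. We treat them separately, with (a) essentially a soundness argument about interval enclosures and (b) a contradiction argument reusing the bottom-up computation of \Cref{pseudocode:consistency}.

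For (a), we invoke \Cref{lemma:abstractionsound}(2): a solution $\vec s$ of $\phi$ extends to a solution $\vec s'$ of $\phi^\alpha$. We take the canonical extension, which assigns to each $f^{\text{in}}_{t}$ the value of $t$ and to each $f^{\text{out}}_{\vec t}$ the value of $f(\vec t)$ under $\vec s$; this is the extension the invariant is maintained with. It then suffices to show that $\vec s'$ satisfies each conjunct $(\bigwedge_{\rho} i_\rho\le f^{\text{in}}_\rho\le j_\rho)\Rightarrow i_\eta\le f^{\text{out}}_{\vec t}\le j_\eta$. The key point is that the boxes $[i_\eta,j_\eta]$ were produced by $I_f$ applied to the box $\prod_\rho[i_\rho,j_\rho]$. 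By \Cref{def:interval-comput}, $f(B)\subseteq C$ holds for every input box $B$, not only for small ones. Hence, if the values of the arguments lie in the input box, the value of $f$ lies in the output box. Since $\vec s'$ gives $f^{\text{in}}_\rho$ and $f^{\text{out}}_{\vec t}$ exactly these values, each implication is valid under $\vec s'$. We stress that this part uses no property of $T$; each clause is in fact a valid statement about $f$ once the abstraction variables are read through $\gamma$. Thus $\psi$ is implied by $\phi$, and the invariant is preserved.

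For (b), we argue by contradiction and assume $T\models\psi$. We then show by induction on subterms $\rho$ of $\tau^\gamma$, innermost first, that $T$ assigns to the abstracted counterpart of $\rho$ a value in $[i_\rho,j_\rho]$. Here the abstracted counterpart of $f(\vec t)$ is $f^{\text{out}}_{\vec t}$, and for arithmetic subterms it is the corresponding polynomial over abstraction variables; call it $\rho^\alpha$, so that $\tau=(\tau^\gamma)^\alpha$.
\begin{itemize}
\item For variables and constants, the claim holds because \Cref{pseudocode:consistency} chose boxes around their $T$-values.
\item For $+$ and $\cdot$, it holds by soundness of interval arithmetic, applied to the $T$-values of the abstracted operands.
\item For a transcendental subterm $\eta=f(\vec t)$, we use \Cref{remark:initial-propagation}: the equalities $f^{\text{in}}_\rho=\rho^\alpha$ are in $T$ and satisfied, so $T(f^{\text{in}}_\rho)\in[i_\rho,j_\rho]$ by induction. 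The premise of the corresponding conjunct of $\psi$ then holds, so $T(f^{\text{out}}_{\vec t})\in[i_\eta,j_\eta]$.
\end{itemize}
The induction yields $T$'s value of $\tau$ in $[i,j]$.

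We then inspect the table in \Cref{pseudocode:consistency}. The output $b\in\{\top,\bot\}$ occurs only in the columns $j<0$, $i>0$, $i=j=0$, or $0=i<j$ for the literals $\tau<0$ and $\lnot(\tau<0)$. In each such case, every value in $[i,j]$ decides the literal the same way, namely as $b$. Since $T$'s value of $\tau$ lies in $[i,j]$, we get $T\models\ell$ iff $b=\top$. By the precondition $b\neq T(\ell)$, and the NRA consistency check already gave $T\models\ell$ iff $T(\ell)=\top$, this is a contradiction. Hence $T\not\models\psi$.

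The main obstacle we expect is the induction in (b). It needs careful bookkeeping that the abstraction variables agree with the abstracted subterms in $T$; this relies on the in-equalities being present and on all of $\allvars(\ell)$ being assigned. It also needs the boxes for variables to be chosen around their $T$-values, which holds because those values are algebraic and thus representable. We expect the rest to be routine.
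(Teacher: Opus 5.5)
Your proposal is correct and follows essentially the same route as the paper. Part (a) is the paper's argument (via the invariant and the lemma on refinements) that the canonical extension satisfies each implication because $I(f(\vec t))$ is produced by $I_f$ from the argument boxes; part (b) is the same proof by contradiction, pushing box membership bottom-up through the in-equalities of \Cref{remark:initial-propagation} and the assumed conjuncts of $\psi$ until $T$'s value of $\tau$ lies in $[i,j]$. Your structural induction over all subterms and your explicit table inspection only spell out steps the paper compresses (one caveat: reading \Cref{def:interval-comput} as guaranteeing $f(B)\subseteq C$ for every box is stronger than its literal wording, but the paper relies on exactly the same reading in \Cref{aux-lemma:composition}).
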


\begin{proof}[Proof idea]
    To prove that $T \not\models \psi$ one reasons by contradiction. Let $\tau
    \sim 0$ be the atomic formula in $\ell$. We assume that $T \models \psi$,
    and under this assumption show that $T \models \rho \in I(\rho)$ holds for
    each subterm~$\rho$ of the term $\tau$ ($\tau$ included). The induction relies
    on~\Cref{remark:initial-propagation} to handle function
    composition. This implies $T \models \tau \in I(\tau)$, 
    which in turn implies $b = T(\ell)$, 
    contradicting the hypotheses on the input of~\Cref{pseudocode:refinement}. 

    For the proof that every solution to $\phi$ can be extended into a solution to $\phi^\alpha \land \psi$, we rely on the fact that $\psi$ over-approximates the transcendental functions occurring in~$\tau$.
\end{proof}

After computing $\psi$, the TRA plugin identifies a conjunct $(\bigwedge_{\rho \in \mathcal{T}} i_{\rho} \leq f_{\rho}^{\text{in}} \leq j_{\rho}) \Rightarrow i_\eta \leq f_{\vec t}^\text{out} \leq j_\eta$ of this formula that is not satisfied by $T$. 
This reduces to evaluating inequalities between algebraic numbers, which can be done in polynomial time, as it corresponds to a query in the existential theory of the reals with a fixed number of variables~\cite{Renegar92}.

Let $x$ be the variable in the identified conjunct that was 
decided last in the trail $T$. The TRA plugin returns to MCSAT the formula $\psi$, 
to be conjoined to $\phi^\alpha$ (refining the abstraction), 
along with the variable $x$ to backtrack.
Before doing so, however, the TRA plugin performs some bookkeeping to ensure that the watchlist of literals waiting for the \mbox{$\delta$-consistency} check, and their values in the trail, will remain consistent with the trail after backtracking. More precisely, 
the plugin will re-add to the watchlist
every literal~$\ell$ on the trail that, after backtracking~$x$, will have at least one variable in $\allvars(\ell)$ become unassigned. Moreover, for any such literal, any trail assignment $\ell \mapsto \top^\delta$ is set back to $\ell \mapsto \top$.

\begin{remark}
    Let us assume that the variable $x$ is $f_\rho^{\text{in}}$ for some~$\rho$ (the case of $f_{\vec t}^{\text{out}}$ is similar).
    After backtracking $f_\rho^{\text{in}}$, unit propagation will add 
    either $\lnot (i_\rho \leq f_\rho^{\text{in}})$ or $\lnot (f_\rho^{\text{in}} \leq j_\rho)$ to the trail, ensuring that the antecedent of the identified conjunct is falsified. Then, should the NRA plugin be asked to decide a value for $f_\rho^{\text{in}}$, it will return a value different from the one assigned before backtracking.
\end{remark}

\subsection{Soundness of the procedure}

We show that ours is a $\delta$-satisfiability procedure:

\begin{restatable}[Soundness]{theorem}{TheoremSoundnessDelta}\label{theorem:soundness-delta}%
    Let $\phi$ be a TRA formula and $\delta > 0$ be a tolerance. If MCSAT with
    the TRA plugin in $\delta$-mode returns {\rm\texttt{SAT}},
    $\phi$~is satisfiable; if it returns {\rm\texttt{UNSAT}}, $\phi$ is
    unsatisfiable; if it returns {\rm\dsat}, $\phi$ is
    $\delta$-satisfiable.
\end{restatable}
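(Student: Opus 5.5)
The plan is to keep one invariant over the whole run and then read off the three answers from it. The invariant: the current abstraction, say $\phi^\alpha_k$ after $k$ refinements, still satisfies the two properties of \Cref{lemma:abstractionsound}. That is, every variable of $\phi$ occurs in $\phi^\alpha_k$, and every solution of $\phi$ extends to a solution of $\phi^\alpha_k$.

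\textbf{Maintaining the invariant.} This is an induction on the number of refinements. The base case is \Cref{lemma:abstractionsound} itself. Clauses learned by the NRA plugin during conflict analysis are implied by $\phi^\alpha_k$ (by \Cref{assumption:nra}), so they preserve the property. Clauses $\psi$ produced by the TRA plugin come from \Cref{pseudocode:refinement}. By \Cref{lemma:algo2-is-refinement}, every solution of $\phi$ extends to a solution of $\phi^\alpha_k\land\psi$. Backtracking and the watchlist bookkeeping do not change the formula, so they leave the invariant untouched.

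\textbf{The UNSAT case.} MCSAT answers \texttt{UNSAT} only when a conflict arises with no decision in the trail. The trail then consists only of propagations entailed by the current abstraction. The conflict is raised either by the NRA plugin, or by the TRA check on some $\ell$ whose value was propagated. In the NRA case, soundness of MCSAT with the NRA plugin (\Cref{assumption:nra}) gives that $\phi^\alpha_k$ is unsatisfiable. The TRA case is the part I expect to be the main obstacle. The claim to establish is that the refinement clause $\psi$ is falsified by a trail whose assignments are all forced by $\phi^\alpha_k$, so $\phi^\alpha_k\land\psi$ is unsatisfiable.

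\begin{itemize}
\item If the trail contains no semantic decisions, the variables in $\allvars(\ell)$ are assigned only by propagation.
\item Under \Cref{remark:initial-propagation}, the values of those variables are then entailed.
\item Alternatively, one can argue operationally: after $\psi$ is conjoined, the procedure re-runs on $\phi^\alpha_k\land\psi$, and an \texttt{UNSAT} answer with empty decision level is an NRA-sound answer for that formula.
\end{itemize}

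In either route the current abstraction is unsatisfiable. By the invariant, any solution of $\phi$ would extend to a solution of it, so $\phi$ is unsatisfiable.

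\textbf{The SAT and $\delta$-SAT cases.} The completeness check has passed with a consistent trail $T$. By \Cref{assumption:nra} all values in $T$ are algebraic reals, and by the invariant they cover every variable of $\phi$. Completeness requires every watched literal to have been checked, so each clause of $\phi$, through its \texttt{original} counterpart, contains a fully-assigned literal $\ell$ with value $\top$ or $\top^\delta$ after the TRA consistency step.

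\Cref{remark:delta-consistency} and \Cref{lemma:algo1-is-delta-consistency-check} then give $T\models\ell^\gamma$ when the value is $\top$, and $T\models(\ell^\gamma)^\delta$ when it is $\top^\delta$. Two points remain to check:
\begin{itemize}
\item A literal $\ell$ containing no transcendental function has $\ell=\ell^\gamma$, and is covered by the NRA consistency check.
\item The clauses of $\phi$ correspond one-to-one to the \texttt{original} clauses, and $(\cdot)^\gamma$ applied to them returns exactly the literals of $\phi$.
\end{itemize}

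If every clause has a $\top$ witness, the restriction of $T$ to $\vars(\phi)$ satisfies $\phi$, giving \texttt{SAT}. Otherwise some clause is witnessed only by $\top^\delta$, and the procedure returns \dsat. In that case each clause has a literal whose $\delta$-weakening holds, and a true literal also satisfies its own $\delta$-weakening. Since weakening acts literal-wise, $T$ satisfies $\phi^\delta$, so $\phi$ is $\delta$-satisfiable.
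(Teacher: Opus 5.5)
Your invariant and your handling of the \texttt{SAT} and \dsat cases match the paper's proof. The gap is in the \texttt{UNSAT} case, in the subcase where the level-0 conflict is raised by the TRA check. This is the case you flag as the obstacle, and neither of your two routes closes it.

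Your first route assumes that, with no decisions, the variables of $\allvars(\ell)$ are ``assigned only by propagation'', and that \Cref{remark:initial-propagation} makes their values entailed. In the paper's MCSAT model, first-order variables are assigned only by semantic decisions; Boolean and semantic propagations only assign truth values to literals. Moreover, \Cref{remark:initial-propagation} only says that the literals $f^{\text{in}}_t = t$ stay on the trail. It says nothing about any variable's value being forced.

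Your second route does not match the control flow. When the consistency check fails and the trail contains no decision, MCSAT returns \texttt{UNSAT} immediately, without conflict analysis. So $\psi$ is never computed or conjoined, and there is no re-run on $\phi^\alpha_k\land\psi$.

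The missing observation is that this subcase cannot occur.
\begin{enumerate}
\item With no decisions, $\dom(T)$ contains no first-order variable.
\item The TRA check is deferred via the watchlist until $\allvars(\ell)\subseteq\dom(T)$, so it can only run on \texttt{original} literals with $\allvars(\ell)=\emptyset$.
\item Such an $\ell^\gamma$ contains no transcendental application. Each such application, including $\pi$ and $e$ seen as applications to $1$, contributes an ``out'' variable to $\allvars(\ell)$.
\item Hence \Cref{pseudocode:consistency} returns $T(\ell)$ at its first line and raises no conflict.
\end{enumerate}
So every level-0 conflict is Boolean or comes from the NRA plugin, and \Cref{assumption:nra} together with the invariant gives unsatisfiability of $\phi$.

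If you instead want a model that allows value propagations at level 0, your first route can be repaired, but with a different justification. You would need to state and justify that every level-0 trail entry, including any propagated value, is a consequence of $\phi^\alpha_k$. Then every model of $\phi^\alpha_k$ agrees with $T$ on $\allvars(\ell)$, so $T\not\models\psi$ from \Cref{lemma:algo2-is-refinement} makes $\phi^\alpha_k\land\psi$ unsatisfiable. The second half of that lemma then gives unsatisfiability of $\phi$. This premise cannot be obtained from \Cref{remark:initial-propagation}.
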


\begin{proof}[Proof sketch]
    \textit{\texttt{SAT} case:} If MCSAT returns \texttt{SAT}, The trail $T$ assigns a value to all variables in $\phi^\alpha$, and every clause has a literal~$\ell$ 
    assigned to $\top$. When $\ell$ is \texttt{original}, 
    \Cref{remark:delta-consistency} gives $T \models \ell^\gamma$. 
    Hence, every \texttt{original} clause in $\phi$ has a literal satisfied by $T$, and $\phi$ is therefore satisfiable.

    \textit{\dsat case:} This case is similar to the previous one, but now there is a clause in $\phi^\alpha$ with no literal assigned to $\top$, but with a literal $\ell$ assigned to $\top^\delta$.
    \Cref{remark:delta-consistency} gives $T \models (\ell^\gamma)^\delta$, 
    and $\phi$ is found to be $\delta$-satisfiable.
    
    \textit{\texttt{UNSAT} case:} In this case, $\phi^\alpha$ is found to be unsatisfiable. By the second property in~\Cref{lemma:abstractionsound}, 
    which is preserved by \Cref{lemma:algo2-is-refinement}
    (\Cref{pseudocode:refinement}), $\phi$ is unsatisfiable as well.
\end{proof}

\subsection{More on \texorpdfstring{$\delta$}{delta}-satisfiability}\label{sec:-delta-vs-standard}

The notion of $\delta$-satisfiability procedure was introduced in~\cite{gao2012delta} to provide theoretical grounding for the use of numerical methods in decision procedures.
Different procedures may exploit the flexibility this notion offers in different ways. 
In our setting, 
we only employ $\delta$-weakening when checking the consistency of literals involving transcendental functions (\Cref{pseudocode:consistency}). 
This means, for example, that we never relax literals that are in \nra.

Note that even 
 the definition of \tra \mbox{$\delta$-consistency} check  (Def.~\ref{def:delta-consistency})
 allows some flexibility in the output.
For example, when the interval $[i,j]$ computed for a literal $\tau < 0$ is included in $(0,\delta)$, a $\delta$-consistency check may return either $\bot$ or $\top^\delta$. 
However, the table that we use in~\Cref{pseudocode:consistency}, which returns $\bot$ in this case, is the most precise possible. 
This guarantees that, within a single consistency check, an inconsistency output (that leans toward proving \unsat) is always preferred over a $\delta$-consistency one (which leans toward proving \dsat).

Our $\delta$-satisfiability procedure  naturally extends to a standard \sat/\unsat procedure. It suffices to iteratively run the $\delta$-satisfiability procedure starting with a large value of~$\delta$ (e.g.,~$\delta = 1$), halving~$\delta$ and restarting when finding \dsat. Crucially, restarting can be done while preserving the current abstraction $\phi^\alpha$, since the learned lemmas are independent of~$\delta$.
Decision heuristics (based, e.g., on caching~\cite{HIG25}), can further improve efficiency during restarts.
By never returning \dsat, \Cref{theorem:soundness-delta} 
guarantees that this \sat/\unsat mode is sound with respect to the standard 
notion of satisfiability.

\section{Specialized treatment of the sine and exponential functions}\label{section:special-treatment}

The previous section outlined a general procedure for \tras 
based solely on interval computability. Here, we show how additional assumptions can be used to refine this base procedure. We focus on the sine and exponential functions, though the approach extends 
in a modular way to other analytic functions.
Formally, we now consider the structure 
$(\R, 0, 1, \pi, e, +, \cdot, \sin, \exp, f_1,\dots,f_k, <, =)$,
where $f_1,\dots,f_k$ are interval computable functions.
(We recall that we see the constants $\pi$ and $e$ as interval-computable unary constant functions applied to the constant $1$.)

\subsection{Refining clause learning}

The main improvement\footnote{A further improvement, omitted for lack of space, consists of adding clauses (none marked as~\texttt{original}) 
to the initial abstraction $\phi^\alpha$, 
limiting the values of ``in'' and ``out'' variables. 
For example, we add constraints such as~$-1 \leq \sin_t^\text{out} \leq 1$, $1 + \exp_t^{\text{in}} \leq \exp_t^\text{out}$, and $3.14 < \pi_1^{\text{out}} < 3.15$.} 
to the procedure 
concerns refinements (i.e., the conjunctions added in line~\ref{refinement:psi} of~\Cref{pseudocode:refinement}). 
Let $h$ be either $\sin$ or $\exp$, and consider 
a conjunct of the form
\begin{equation}\label{eq:gamma}
    \gamma(h^{\text{in}}_t,h^\text{out}_t) \coloneqq (i_{\text{in}} \leq h^{\text{in}}_t \leq j_{\text{in}} \Rightarrow i_{\text{out}} \leq h^{\text{out}}_t \leq j_{\text{out}}),\,
\end{equation}
conjoined to $\psi$ at this line. Our goal is to strengthen $\psi$ by 
replacing $\gamma$ with a formula $\chi$ that entails it, while preserving the specification of~\Cref{pseudocode:refinement}.
To obtain a substantial strengthening of $\gamma$, we furthermore 
require $\chi$ to satisfy: 
\begin{description}
\item[\labeltext{P}{property:p}:] for every~$[a,b] \subseteq [\round{i_{\text{in}}}-1,\round{j_{\text{in}}}+1]$ 
with $b - a \leq j_{\text{in}} - i_{\text{in}}$, where $\round{k}$ stands for the integer closest to $k$, the possible values $h_t^{\text{out}}$ can take 
    in a solution to $a \leq h_t^{\text{in}} \leq b \land \chi$ all lie in an interval 
    of width at most $j_{\text{out}} - i_{\text{out}}$.
\end{description}
Property~\ref{property:p} guarantees that $\chi$ approximates $h$ well throughout an interval of width (at least)~$2$. 
We now define the formula $\chi$ for the sine and exponential functions.

\subsection{Definition of $\chi$ for the sine function}\label{subsec:chi-sine}

For the sine function, we define $\chi$ through Taylor approximations.
Let ${f \colon \R \to \R}$ be an infinitely differentiable function, 
and ${c \in \R}$. We recall that the Taylor expansion of $f$ centred at $c$ is the infinite sum
$\sum_{k=0}^\infty \frac{f^{(k)}(c)}{k!}(x-c)^k$, where
$f^{(k)}$ denotes the $k$th derivative of~$f$. The truncation to the
first $n$ terms of the sum is the \emph{$n$th Taylor polynomial of~$f$ centred at $c$}.

For $r \in \Q$ and $d \in \N$, define $({\rm T}^d_r\sin)(x)$ as the term 
$\sum_{k=0}^d \frac{\sin^{(k)}(r \cdot \pi)}{k!}(x- r \cdot \pi_1^{\text{out}})^k$.
It has the following properties:
\begin{enumerate}
    \item\label{prop:sin:1} If $r$ is of the form $2 \cdot j + \frac{1}{2}$ or $2 \cdot j + \frac{3}{2}$, for some $j \in \Z$, 
then $\sin^{(k)}(r \cdot \pi)$ is in $\{-1,0,1\}$. 
    \item\label{prop:sin:2} Assuming that $\pi_1^{\text{out}}$ is assigned the value $\pi$, then 
    \begin{align*}
        \textstyle
        ({\rm T}^{4 \cdot n+2}_{2 \cdot j + \frac{1}{2}}\sin)(x) &\leq \sin(x) \leq ({\rm T}^{4 \cdot n}_{2 \cdot j + \frac{1}{2}}\sin)(x), \text{ and}\\
        \textstyle ({\rm T}^{4 \cdot n}_{2 \cdot j + \frac{3}{2}}\sin)(x) &\leq \sin(x) \leq ({\rm T}^{4 \cdot n+2}_{2 \cdot j + \frac{3}{2}}\sin)(x), 
    \end{align*} for all $j \in \Z$, $n \in \N$, and $x \in \R$.
    \item\label{prop:sin:3} For every $r \in \Q$, $y = ({\rm T}^d_r\sin)(x)$ tends to the graph of $y = \sin(x)$ as $d \to \infty$ and $\pi_1^{\text{out}} \to \pi$.
\end{enumerate}

Consider~$\gamma(\sin_t^{\text{in}},\sin_t^{\text{out}})$ from~\Cref{eq:gamma}.
Let $p_\ell,p_u \in \Z$ be such that $[\round{i_{\text{in}}}-1,\round{j_{\text{in}}}+1] \subseteq [2 \cdot \pi \cdot p_\ell, 2 \cdot \pi \cdot p_u]$ and $p_u-p_\ell$ is minimal. 
For $\ell < u \in \Q$ with $\pi \in [\ell,u]$ and $n \in \N$, let $\chi_{[\ell,u]}^n(\sin_t^{\text{in}},\sin_t^{\text{out}},\pi_1^{\text{out}})$ be the formula given by:
\begin{align*}
    & \ell < \pi_1^{\text{out}} < u\\ 
    \land& \bigwedge\nolimits_{j = p_\ell}^{p_u} 
        \textstyle ({\rm T}^{4 \cdot n+2}_{2 \cdot j + \frac{1}{2}}\sin)(\sin_{t}^{\text{in}}) \leq \sin_{t}^{\text{out}} \leq ({\rm T}^{4 \cdot n}_{2 \cdot j + \frac{1}{2}}\sin)(\sin_{t}^{\text{in}})\\
    \land& \bigwedge\nolimits_{j = p_\ell}^{p_u}  
        \textstyle ({\rm T}^{4 \cdot n}_{2 \cdot j + \frac{3}{2}}\sin)(\sin_{t}^{\text{in}}) \leq \sin_{t}^{\text{out}} \leq ({\rm T}^{4 \cdot n+2}_{2 \cdot j + \frac{3}{2}}\sin)(\sin_{t}^{\text{in}}). 
\end{align*}
By Items~\ref{prop:sin:1} and~\ref{prop:sin:2}, $\chi_{[\ell,u]}^n$ is a satisfiable formula of NRA.
Moreover, Item~\ref{prop:sin:3} implies that, as $n$ increases and the interval $[\ell,u]$ is taken as a tighter approximation of $\pi$, the set of solutions to $\chi_{[\ell,u]}^n$ converges to that of the~\tra formula ${\pi_1^{\text{out}} = \pi \land \sin_t^{\text{out}} = \sin(\sin_t^{\text{in}})}$.
We may thus define $\chi \coloneqq \chi_{[\ell,u]}^n$ for $n$ sufficiently large and $[\ell,u]$ sufficiently tight rational approximation of $\pi$. This construction is effective:

\begin{restatable}[]{lemma}{LemmaSineChi}\label{lemma:sinechi}
    One can algorithmically compute $\ell,u$ and $n$ such that the formula $\chi_{[\ell,u]}^n$ entails $\gamma$ and satisfies Property~\ref{property:p}. Furthermore, substituting $\gamma$ with~$\chi_{[\ell,u]}^n$ in the output formula~$\psi$ of~\Cref{pseudocode:refinement} 
    preserves all properties of $\psi$ guaranteed by the specification of that algorithm. 
\end{restatable}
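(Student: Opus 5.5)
The plan is to choose the three parameters in order: first the rational bracket $[\ell,u]$ of $\pi$, then the degree $n$. The choices are driven by two margins that we compute explicitly. Write $w_{\text{in}} \defeq j_{\text{in}}-i_{\text{in}}$ and $w_{\text{out}} \defeq j_{\text{out}}-i_{\text{out}}$, and let $K \defeq [2\pi p_\ell, 2\pi p_u]$. The box $[i_{\text{out}},j_{\text{out}}]$ was returned by $I_{\sin}$ on input $[i_{\text{in}},j_{\text{in}}]$, so it contains $\sin([i_{\text{in}},j_{\text{in}}])$. We will assume the interval routine returns strict enclosures, as in the example for $\exp$ where $e^a<p<q<e^b$. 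This assumption is the point I am least sure is automatic; if it fails we would enlarge the output box slightly before building $\chi$. Under it, $m \defeq \min\bigl(\min \sin([i_{\text{in}},j_{\text{in}}]) - i_{\text{out}},\; j_{\text{out}} - \max \sin([i_{\text{in}},j_{\text{in}}])\bigr)$ is positive, and a positive rational lower bound for $m$ can be computed with the same interval arithmetic.

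\emph{Step 1 (uniform error bound).} For $x\in K$ and $\pi_1^{\text{out}}=p\in[\ell,u]$, we bound the distance between each Taylor term $({\rm T}^d_{r}\sin)(x)$ in $\chi_{[\ell,u]}^n$ (computed with $p$) and $\sin(x)$. This distance is at most the Lagrange remainder at $p=\pi$, which is $\le (|x-r\pi|)^{d+1}/(d+1)! \le (2\pi(p_u-p_\ell)+2\pi)^{d+1}/(d+1)!$. To this we add a perturbation term for replacing $\pi$ by $p$. That term is bounded by $|p-\pi|\cdot|r|\cdot\sum_{k\ge1} R^{k-1}/(k-1)! \le (u-\ell)|r|e^{R}$, where $R$ bounds $|x-rp|$ on $K$; the coefficients lie in $\{-1,0,1\}$ by Item~\ref{prop:sin:1}. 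Both bounds are explicit rationals. Hence we can pick $u-\ell$ small enough and then $n$ large enough that every bound in $\chi$ lies within $\varepsilon$ of $\sin(x)$ on $K$, for any prescribed rational $\varepsilon>0$. This makes Item~\ref{prop:sin:3} quantitative.

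\emph{Step 2 (entailment and Property~\ref{property:p}).} Since $[i_{\text{in}},j_{\text{in}}]\subseteq K$, any solution of $\chi$ with $\sin_t^{\text{in}}\in[i_{\text{in}},j_{\text{in}}]$ has $\sin_t^{\text{out}}$ within $\varepsilon$ of $\sin(\sin_t^{\text{in}})$. We need some $j$ in the conjunction whose Taylor terms are valid on $\sin_t^{\text{in}}$, and any one does, because Step 1 is uniform on $K$. Choosing $\varepsilon<m$ therefore gives $\chi\models\gamma$. For Property~\ref{property:p}, take $[a,b]\subseteq[\round{i_{\text{in}}}-1,\round{j_{\text{in}}}+1]\subseteq K$ with $b-a\le w_{\text{in}}$. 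The reachable values of $\sin_t^{\text{out}}$ then lie in an interval of width at most $\operatorname{osc}_{[a,b]}\sin+2\varepsilon$. So it suffices to have $\operatorname{osc}_{[a,b]}\sin+2\varepsilon \le w_{\text{out}}$ for all such $[a,b]$.

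This is the main obstacle. If $[i_{\text{in}},j_{\text{in}}]$ straddles a maximum of $\sin$, then $w_{\text{out}}$ can be much smaller than $w_{\text{in}}$, while a translate $[a,b]$ near a zero oscillates by about $w_{\text{in}}$. Hence the inequality can fail for an arbitrary enclosure. The first remedy I would try is to rely on how $I_{\sin}$ produces its boxes: via the mean value theorem with the global Lipschitz constant $1$, as in the $\exp$ example. Then $w_{\text{out}} > w_{\text{in}} \ge \operatorname{osc}_{[a,b]}\sin$, and the strict gap lets us take $2\varepsilon$ below it. Failing that, the fallback is to widen $[i_{\text{out}},j_{\text{out}}]$ to width $w_{\text{in}}+2\varepsilon$ before the construction. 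The widened box still contains $\sin([i_{\text{in}},j_{\text{in}}])$, and the soundness argument of \Cref{lemma:algo2-is-refinement} only needs such a containment, though we must check that the resulting $\psi$ still excludes $T$.

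\emph{Step 3 (specification of \Cref{pseudocode:refinement}).} First, $T\not\models\psi$ is preserved: $\chi\models\gamma$, so replacing $\gamma$ by $\chi$ only strengthens $\psi$, and any trail falsifying the old $\psi$ falsifies the new one. Second, extendability of solutions of $\phi$ is preserved. Extend such a solution by $\pi_1^{\text{out}}\mapsto\pi$ (already forced by the in/out equalities) and $\sin_t^{\text{out}}\mapsto\sin(\sin_t^{\text{in}})$. Then $\ell<\pi<u$ holds, and every Taylor conjunct holds by Item~\ref{prop:sin:2}, so $\chi$ is satisfied. Finally, the construction is effective: all quantities in Steps 1--2 are explicit rational bounds, so a simple search that halves $u-\ell$ and increments $n$ terminates.
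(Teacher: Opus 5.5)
\textbf{The gap: Property~\ref{property:p}.} Neither of your remedies establishes Property~\ref{property:p} as stated.

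\begin{itemize}
\item Remedy~(1) imports a hypothesis on $I_{\sin}$: an output box of width strictly above $w_{\text{in}}$. \Cref{def:interval-comput} does not provide this, since it allows near-exact enclosures.
\item Remedy~(2) only yields Property~\ref{property:p} relative to a widened output box, which is a weaker property. Moreover, if $\chi$ is built to entail the widened $\gamma'$ instead of $\gamma$, the proof of $T\not\models\psi$ can indeed break, because that proof propagates the narrow box $I(\eta)$ up to $I(\tau^\gamma)$.
\end{itemize}

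The obstacle you found is real, and no sharper argument removes it. By Item~\ref{prop:sin:2} and $\ell<\pi<u$, every $\chi^n_{[\ell,u]}$ is satisfied by $(\sin_t^{\text{in}},\sin_t^{\text{out}},\pi_1^{\text{out}})=(x,\sin x,\pi)$ for all $x$. So the reachable set over a window $[a,b]$ contains $\sin([a,b])$. Hence Property~\ref{property:p} forces $w_{\text{out}}\ge\mathrm{osc}_{[a,b]}\sin$ for every admissible window; this upgrades your observation that a sufficient condition can fail to the property itself failing.

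For instance, take $[i_{\text{in}},j_{\text{in}}]=[1.5708-\frac{1}{32},\,1.5708+\frac{1}{32}]$ with a near-exact $I_{\sin}$. Then $w_{\text{out}}\approx 5\cdot 10^{-4}$. Yet the admissible window $[3-\frac{1}{16},\,3]\subseteq[1,3]$ has oscillation about $0.06$.

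So the lemma needs an extra hypothesis on $I_{\sin}$. Yours suffices, and under it your argument is complete. The paper's own proof does not close this either. Its claim that convergence ``similarly'' gives Property~\ref{property:p} only shows that the reachable set tends to $\sin([a,b])$, which is exactly the step you flagged. In such cases its search over $n$ would not terminate.

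\textbf{Everything else: sound, by a different route.}

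\begin{itemize}
\item \emph{The paper's route.} It argues existence non-constructively from Item~\ref{prop:sin:3}. It then iterates over $n$ with $[\ell,u]=A(n)$, deciding at each step both the entailment and Property~\ref{property:p} as NRA sentences. Termination comes from the monotonicity of $\chi^n_{[\ell,u]}$ in $n$ and $[\ell,u]$.
\item \emph{Your route.} Your explicit Lagrange-remainder bound plus the $\pi_1^{\text{out}}$-perturbation bound fixes the parameters a priori, with no quantifier elimination. The price is that you need a computable positive lower bound on the margin $m$. The strict enclosure you assume for this is also used, tacitly, by the paper (``lies strictly within'').
\item \emph{Your Step~3.} It is more complete than the paper's one-line justification. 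There, $\chi\models\gamma$ only preserves $T\not\models\psi$. Extendability of solutions of $\phi$ needs exactly your observation: solutions extended by $\pi_1^{\text{out}}\mapsto\pi$ and $\sin_t^{\text{out}}\mapsto\sin(\sin_t^{\text{in}})$ satisfy $\chi$ by Item~\ref{prop:sin:2}.
\end{itemize}
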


\subsection{Definition of $\chi$ for the exponential function}\label{subsec:special-treatment-exp}

To define $\chi$ for the exponential function, we can still rely on
Taylor approximations for lower bounds. For $c \in \Z$ and ${d \in \N}$, define
$(\rm{T}^d_c \exp)(x)$ as the term ${(e_1^{\text{out}})^c \cdot \sum_{k=0}^d
\frac{1}{k!}(x-c)^k}$. When $e_1^{\text{out}}$ is assigned the value $e$, 
this term corresponds to the \mbox{$d$-th} Taylor polynomial of $\exp$ centred at $c$, 
and we have ${(\rm{T}^{2n+1}_c \exp)(x) \leq \exp(x)}$ for every $n \in \N$, $c \in \Z$ and $x \in \R$.

Taylor approximations are inadequate for upper-bounding the
exponential function, as no finite polynomial approximation can serve as a
global upper bound for a super-polynomial function. 
For this purpose, we use Pad\'e approximants instead, 
which are generalizations of Taylor polynomials by rational functions~\cite{Padel}.

Given $c \in \Z$ and $d \in \N$, define the term
\begin{align*}
    ({\rm P}^d_c \exp)(x) \coloneqq (e_1^{\text{out}})^c \cdot \frac{Q_d(c,x)}{Q_d(x,c)}, 
\end{align*}
where $Q_d(x,y) \coloneqq \textstyle\sum_{k=0}^{d}\frac{\left(2d-k\right)!}{\left(d-k\right)!d!k!}\left(y-x\right)^{k}$.
When $e_1^{\text{out}}$ is assigned the value $e$, this term corresponds to the $d$-th Pad\'e approximant of $\exp$ centred at $c$, and has the following well-known properties
for every $n \in \N$ and $c \in \Z$:
\begin{enumerate}
    \item The polynomial $Q_{2n+1}(x,c)$ has a single root $r > c$, and
        $\exp(x) \leq ({\rm P}^{2n+1}_c \exp)(x)$ for every $x \in [c,r)$.\label{item:pole-pade}
    \item $\exp(x) \leq ({\rm P}^{2n}_c \exp)(x)$ for every $x \in (-\infty,c]$.\\
        (In this case, $Q_{2n}(x,c)$ has no roots.)
    \item $y = ({\rm P}^d_c \exp)(x)$ tends to the graph of $y = \exp(x)$ as $d \to \infty$ (and the root $r$ tends to infinity).
\end{enumerate}

Let~$\gamma(\exp_t^{\text{in}},\exp_t^{\text{out}})$ as in~\Cref{eq:gamma}, 
and $c \coloneqq \round{j_{\text{in}}}+1$.
For $\ell < u \in \Q$ with $e \in [\ell,u]$, and $n \in \N$, let $\chi_{[\ell,u]}^n$ be the formula in variables $\exp_t^{\text{in}}$, $\exp_t^{\text{out}}$ and $e_1^{\text{out}}$ given by:
{\begin{align*}
    & \ell < e_1^{\text{out}} < u \land (T_c^{2n+1}\exp)(\exp_t^{\text{in}}) \leq \exp_t^{\text{out}} \land {}\\ 
    & (\exp_t^{\text{in}} \leq c \Rightarrow \exp_t^{\text{out}} \leq (P_c^{2n}\exp)(\exp_t^{\text{in}})) \land {}\\
    &\exists r \big( Q_{2n+1}(r,c) = 0 
        \land  \big(c \leq \exp_t^{\text{in}} < r \Rightarrow {}\\ 
            &&\hspace{-3.1cm} \exp_t^{\text{out}} \leq (P_c^{2n+1}\exp)(\exp_t^{\text{in}})\big)\big).
\end{align*}}
One can show that~\Cref{lemma:sinechi} also holds for the formula $\chi_{[\ell,u]}^n$ above,
with respect to~$\gamma(\exp_t^{\text{in}},\exp_t^{\text{out}})$. 
It thus suffices to compute $\ell$, $u$ and $n$ as given in~\Cref{lemma:sinechi},
and define~$\chi \coloneqq \chi_{[\ell,u]}^n$. 

\section{Implementation and experimental evaluation}\label{sec:implementation}%
We implemented our procedure,
restricted to the structure $(\R, 0, 1, \pi, e, +, \cdot, \sin, \exp, <, =)$,
as an open-source plugin for \yices, relying on its MCSAT infrastructure
and NRA plugin.
We use the \tool{ARB}
library~\cite{Arb} for the arbitrary-precision interval arithmetic required by
the $\delta$-consistency check. Below, we refer to the resulting tool as
\yicestra.

For the benchmark, we use the same instances 
and time limit as~\cite{CimattiGIRS18}: 
2\,512 instances with a time limit of 1000 seconds, and no memory limit.
We refer to~\cite{CimattiGIRS18} for further details on the instances.
The results of our experiments are available in the accompanying  artifact~\cite{artifact}.

We performed two runs of \yicestra, one with~$\delta$ fixed at $10^{-3}$ and one in the \texttt{SAT}/\texttt{UNSAT} mode with decreasing~$\delta$.
The latter is denoted as \mbox{\yicestraDec} in plots.
We compared our tool with the SMT solvers \mathsat (v5.6.15) and
\cvc (v1.3.2), and with the $\delta$-satisfiability tool \dreal
(v4.21.06.2), with tolerances $\delta = 10^{-3}$ and $\delta = 10^{-9}$.
The two runs of \dreal performed very similarly, 
suggesting that varying~$\delta$ would not affect the results significantly.
\Cref{sec:related-work} gives more information on the methods these tools
implement. 

The results are presented using survival plots in~\Cref{figure:benchmark}, 
and show the competitiveness of our approach.  
Runs taking less than $0.1$ seconds are normalized to $0.1$ seconds for clarity.
With $\delta$ fixed, our tool reported 101 \texttt{SAT}, 954 \texttt{UNSAT} 
and 230 \dsat, outperforming all other tools in each category. 
The \texttt{SAT}/\texttt{UNSAT} mode classified a further 9 \texttt{SAT} and 8 \texttt{UNSAT} instances that were  reported as \dsat by the fixed $\delta$ run. In total, our tool classified as \texttt{SAT} or \texttt{UNSAT} 218 instances that remained unsolved by the other three tools.

Among the other tools, \cvc performs better on \sat instances and \mathsat on \unsat ones.
 Interestingly, although \yicestra solves significantly more \unsat instances than \mathsat within short time limits (e.g., less than 10s), 
 the latter tool almost catches up at 1000s, falling short of just 44 benchmarks.
 This seems to suggest that the linear lemmas \mathsat uses (see~\emph{incremental linearization} in \Cref{sec:related-work}) converge slower than our NRA lemmas, but might scale better. 
 Exploring how the degrees of the polynomial lemmas affects the performance and scalability of our procedure,
 and the development of hybrid approaches to automatically tune these degrees, are
 interesting directions for future work.


The last plot also counts \dsat instances as solved. We observe that the ratio of \dsat to solved instances differs markedly between \yicestra and \dreal. Specifically, with ${\delta = 10^{-3}}$, \dreal classifies 643 instances as \texttt{UNSAT} and 558 as \dsat; 
that is, 46\% of the instances solved by \dreal are classified as \dsat.
By contrast, \yicestra classifies only 18\% of the instances it solves as~\dsat.
Looking more closely at the \dsat instances, \yicestra 
returned \texttt{UNSAT} for 226 instances that \dreal classified as \dsat, 
while only 9 instances classified as \dsat by \yicestra were classified as \texttt{UNSAT} by \dreal. 
This suggests that, while \dsat might intuitively give the impression that a formula is  ``probably satisfiable'', in practice it often is not.

\begin{figure}
\centering
\scalebox{0.95}{%
\begin{tabular}{c}
\scalebox{0.41}{\input{Jupyter/solver-competition.pgf}}\\[2pt]

\scalebox{0.41}{\input{Jupyter/sat-only.pgf}}\\[2pt]

\scalebox{0.41}{\input{Jupyter/unsat-only.pgf}}\\[2pt]

\scalebox{0.41}{\input{Jupyter/delta-sat-only.pgf}}
\end{tabular}}

\vspace{10pt}
\caption{Cactus plots of the results of the benchmark.}\label{figure:benchmark}
\end{figure}









\section{Related Work}\label{sec:related-work}



\textit{Incremental Linearization.}
The idea of incremental linearization~\cite{CimattiGIRS18} is to lazily approximates all non-linear
functions with linear constraints, and solve the approximation with linear solvers. If the approximation is found unsatisfiable, so is the original formula. Otherwise, the model is either actual or spurious. 
Linear lemmas are added to exclude the spurious model. 
\mathsat and \cvc implement this procedure.
Ours is also an abstraction-refinement approach, 
but uses NRA lemmas with higher-degree polynomials. A benefit of this approach is that we can return irrational solutions, while linear solvers are limited to rational ones. Another distinction is that both 
\mathsat and \cvc are based on CDLC(T). We instead work in the MCSAT framework, which is known to perform particularly well with NRA, 
as it allows CAD computations to be performed locally near conflicting assignments.

\smallskip
\textit{Interval Constraint Propagation~(ICP).}
ICP is a branch-and-prune technique~\cite{benhamou2006continuous,ratschan2006efficient}. 
Given a starting box, is uses interval arithmetic to determine constraint signs over the box. If all constraint signs are constant, ICP returns whether the formula is satisfied. Else, it splits the box into smaller boxes and recurs. The results of the recursive calls are computed lazily: 
as long as one answers true, so does ICP.
It is implemented in \isat~\cite{iSAT3} and (under the $\delta$-satisfiability framework) in \dreal.
In our tool, we also use interval arithmetic, but its use is limited to computing the over-approximation of a function over a single point.

\smallskip
\textit{$\delta$-satisfiability.}
We have already introduced $\delta$-satisfiability
in~\Cref{section:tras}. The tool \dreal is tailored for this
relaxed notion of satisfiability, and can only return~\texttt{UNSAT} or
\texttt{$\delta$-SAT}. This differs from the $\delta$-consistency mode of our
tool, which can still answer \texttt{SAT}, and treats NRA
constraints exactly, since they are not affected by the TRA consistency check.
The \ksmt calculus~\cite{brausse2021ksmt}  combines  $\delta$-sat with linearization (however, the tool does not support transcendental functions).

\smallskip
\textit{Deductive methods.}  The \tool{MetiTarski}~\cite{MetiTarski} theorem prover can prove unsatisfiability for some formulas in \tras.
It replaces transcendental functions with user-defined lower and upper bounds, and relies on decision procedures for \nra to prove unsatisfiability.
Compared to our method, \tool{MetiTarski} is not able to prove satisfiability, and it only relies on the initial abstraction provided by the user.
%

\smallskip
\textit{Complete methods.} As stated in the introduction, even small fragments
of TRAs are undecidable. A notable exception is the result by Chen and
Xia~\cite{ChenX23}, who prove decidability for the univariate fragment with the sine
function, where atomic formulas are restricted to rational polynomial
inequalities in $x$ and $\sin(x)$ (disallowing complex terms inside sine).
In proving decidability, they devise a root isolation algorithm that could 
in principle be used in our procedure to tighten the feasibility sets 
computed by the NRA plugin. 
This, together with methods capable of proving satisfiability in the presence of transcendental solutions~\cite{SUPNLA,LippariniRatschanJAR}, as well as local search based approaches
~\cite{Atva22,LSnta-,LHIG25}, 
are  promising directions for further improving our procedure.

\vspace{\baselineskip}
\noindent\textbf{Acknowledgments.}
This work is part of a project co-funded by the European Union (GA 101154447)
and by MCIN/AEI (A CEX2024-001471-M and GA PID2022-138072OB-I00). Views and opinions expressed are
those of the authors only and do not necessarily reflect those of the European
Union or European Commission. Neither the European Union nor the granting
authority can be held responsible for them.


%

%
%
%

\bibliographystyle{alphaurl}
\bibliography{bibliography}

\newpage
\onecolumn
\appendix 



\section*{Additional material for Section~\ref{sec:mcsat-tra}}\label{appendix:sec2}

\subsubsection*{Formulas} Throughout the paper, we assume that all formulas are
quantifier-free and in \emph{conjunctive normal form (CNF)}. We recall that a
\emph{literal} is defined as an atomic formula or its negation. A \emph{clause} 
is a disjunction of literals, and a CNF formula is a conjunction of clauses.

\medskip
\subsubsection*{The CDCL Algorithm}
The main data structure of CDCL is given by the \emph{trail}, a sequence
${\ell_1 \mapsto b_1}$, \dots, $\ell_n \mapsto b_n$ assigning Booleans $b_i$ to
propositional literals $\ell_i$. The trail is \emph{consistent} whenever
$\bigwedge_{i=1}^n(\ell_i \leftrightarrow b_i)$ is satisfiable. A~consistent trail is
\emph{complete} whenever substituting each $\ell_i$ by $b_i$ in the input
formula yields, after the application of simple validities 
(e.g.,~$\top \lor \psi \leftrightarrow \top$), to the formula~$\top$. 
The algorithm reports SAT as soon as it constructs a complete trail.

Elements of the trail are annotated with information to implement a backtracking
mechanism: each element $\ell_i \mapsto b_i$ is annotated as either
\emph{decided} or \emph{propagated}, along with its \emph{decision level} (an
integer). \emph{Decided} literals are guessed by the algorithm and increment the
decision level. \emph{Propagated} literals are derived by unit propagation, and
inherit the current decision level. A clause is \emph{unit} if all but one of
its literals are assigned false in the trail. \emph{Unit propagation} consists
of assigning the remaining literal to true.

When a trail becomes inconsistent, CDCL analyses the annotations on the trail to
derive a \emph{learned clause} $\psi$ that explains the inconsistency
(a.k.a.~\emph{conflict}). By construction, this clause is implied by the input
formula~$\phi$, and the algorithm conjoins $\psi$ to~$\phi$. CDCL then backtracks
(non-chronologically) to a decision level at which the learned clause becomes
unit. If no such level exists, i.e., the conflict occurs at decision level~0,
the algorithm terminates and reports UNSAT.
\section*{Additional material for Section~\ref{sec:axioms-tra}}\label{appendix:sec3}

The goal of this appendix is to show that our procedure is sound (\Cref{theorem:soundness-delta}).
We start with some additional definitions and notation. 
Given a term $t$ (from a TRA) and a map $\nu$ assigning 
a value to each variable in $t$, we write $\llbracket t \rrbracket_\nu$ for the evaluation of $t$ under $\nu$. It is defined inductively as one expects: 
\begin{center}
  $\llbracket x \rrbracket_\nu = \nu(x)$, 
  $\llbracket t_1 + t_2 \rrbracket_\nu = \llbracket t_1 \rrbracket_\nu + \llbracket t_2 \rrbracket_\nu$,
  $\llbracket t_1 \cdot t_2 \rrbracket_\nu = \llbracket t_1 \rrbracket_\nu \cdot \llbracket t_2 \rrbracket_\nu$ and
  $\llbracket f(t_1,\dots,t_n) \rrbracket_\nu = f(\llbracket t_1 \rrbracket_\nu,\dots,\llbracket t_n \rrbracket_\nu)$.
\end{center}
When it is clear from the context, we omit the subscript~$\nu$ and simply write $\llbracket t\rrbracket$.

\begin{definition}[Canonical extension of a map]\label{def:canonical-extension}
  Consider a structure $(\R, 0, 1, +, \cdot, f_1,\dots,f_k, <, =)$, 
  where $f_1,\dots,f_k$ are transcendental functions. 
  Let~${\nu:X\mapsto \R}$ be a map from first-order variables to reals.
  The~\emph{canonical extension}~$\nu^{\text{ext}}$ of $\nu$ 
  is defined from $\nu$ as follows.
  For every $x \in X$, $\nu^{\text{ext}}(x) = \nu(x)$.
  For every term $f(t_1,\dots,t_n)$ from the structure, with $f$ among $f_1,\dots,f_k$, the map $\nu^{\text{ext}}$ contains assignments 
  $f_{t_i}^{{\text{in}}} \mapsto \llbracket t_i\rrbracket_\nu$, for all $i \in [1..n]$, and 
  $f^{\text{out}}_{(t_1,\dots,t_n)} \mapsto \llbracket f(t_1,\dots,t_n) \rrbracket_\nu$.

  (We are implicitly assuming that the variables $f_{t_i}^{{\text{in}}}$ and $f^{\text{out}}_{(t_1,\dots,t_n)}$ are not from $X$.)
\end{definition}

The following lemma is mostly helpful to clarify the computation performed in line~\ref{pseu:cons:K} of~\Cref{pseudocode:consistency}.

\begin{lemma}\label{aux-lemma:composition}
  Let $\tau$ be a term from the first-order theory of a TRA structure $(\R, 0, 1, +, \cdot, f_1,\dots,f_k, <, =)$, where $f_1,\dots,f_k$ are interval computable transcendental total functions. Let $\nu$ be a map from first-order variables to algebraic numbers, and $\delta > 0$ rational. 
  There is an algorithm that computes, for every subterm $\rho$ of $\tau$, 
  a non-empty interval $I(\rho) = [i_\rho,j_\rho]$ such that:
  \begin{itemize}
    \item $\llbracket \rho \rrbracket_\nu \in I(\rho)$.
    \item $j_\tau - i_\tau < \delta$.
    \item For every function application $f(t_1,\dots,t_n)$ in $\tau$ (where $f$ is among $+,\cdot,f_1,\dots,f_k$) the interval $I(f(t_1,\dots,t_n))$ is computed from the intervals $I(t_1),\dots,I(t_n)$ by applying the algorithm $I_f$ from \Cref{def:interval-comput}.
  \end{itemize}
\end{lemma}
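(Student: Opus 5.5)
The plan is to build the intervals bottom-up over the syntax tree of $\tau$, choosing the precision top-down. The key idea is to use the $\epsilon$ guaranteed by \Cref{def:interval-comput} to fix, for each subterm, how narrow its interval must be.

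\textbf{Step 1: top-down precision requirement.} First I fix a bound $\eta$ on the magnitudes of all intermediate values. I would set $\eta_\rho := |\llbracket \rho \rrbracket_\nu| + 2$ for every subterm $\rho$. These values need not be known exactly: a rational upper bound suffices, and one can be obtained from any crude enclosure of $\llbracket\rho\rrbracket_\nu$, computed by a preliminary bottom-up pass.

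Then I assign a target width $\delta_\rho$ to each subterm, starting with $\delta_\tau := \min(\delta,1)$. For an application $\rho = f(t_1,\dots,t_n)$, where $f$ is among $+,\cdot,f_1,\dots,f_k$ (all interval computable), I take the $\epsilon$ given by \Cref{def:interval-comput} for the pair $(\delta_\rho,\max_i \eta_{t_i})$. I then set $\delta_{t_i} := \min(\epsilon,1)$ for each argument.

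If a subterm occurs several times, I take the minimum of its requirements. This is well defined because the syntax tree is finite, so the recursion terminates.

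\textbf{Step 2: base cases and bottom-up computation.} At the leaves, the variables and the constants $0,1$ have algebraic values. For each leaf $\rho$ I pick rationals (or algebraic numbers) $i_\rho < \llbracket\rho\rrbracket_\nu < j_\rho$ with $0 < j_\rho-i_\rho < \delta_\rho$. This is possible by refining the isolating interval of the algebraic number, or simply by taking $v\pm\delta_\rho/3$.

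For an application node, I set $I(f(t_1,\dots,t_n)) := I_f(I(t_1)\times\dots\times I(t_n))$. This gives the third bullet directly.

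\textbf{Step 3: invariants by induction on the term structure.} By induction I show that $\llbracket\rho\rrbracket_\nu\in I(\rho)$, that $\width(I(\rho))<\delta_\rho$, and that $\infnorm{I(\rho)}<\eta_\rho$.

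Membership follows from $f(B)\subseteq C$, since $(\llbracket t_1\rrbracket,\dots,\llbracket t_n\rrbracket)\in B$. The width bound follows from the choice of $\epsilon$. The norm bound holds because the interval contains the value and has width below $1$.

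One small point: \Cref{def:interval-comput} requires $\width(B)>0$. Leaf intervals have positive width, but the output of $I_f$ might be degenerate. If so, I enlarge it slightly, to a positive width still below $\delta_\rho$. This preserves both containment and the width bound.

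Finally, at the root we get $j_\tau-i_\tau<\delta_\tau\le\delta$, which is the second bullet.

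\textbf{Main obstacle.} I expect the main difficulty to be the circularity between the norm bound $\eta$ and the $\epsilon$ of \Cref{def:interval-comput}: $\epsilon$ depends on $\eta$, and $\eta$ depends on the values, which we only know through intervals.

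I would resolve this as follows. First run one coarse pass, with arbitrary positive widths at the leaves, to obtain crude enclosures and hence valid rational $\eta$'s. Then fix the $\epsilon$'s from those $\eta$'s. Since later intervals lie within distance $1$ of the true values, they respect the $\eta$'s.

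A further subtlety is that \Cref{def:interval-comput} only asserts that $\epsilon$ exists, not that it is computable. To handle this, I would phrase the algorithm as iteratively halving the leaf widths and recomputing until the root width is below $\delta$. Step 1 is then used only to prove that this loop terminates, not to compute the $\epsilon$'s.
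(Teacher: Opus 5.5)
Your proposal is correct, and in its final form (halving the leaf widths and recomputing bottom-up via the $I_f$ until the root interval has width below $\delta$) it is exactly the paper's algorithm. Your top-down choice of $\delta_\rho$ and $\eta_\rho$ adds something the paper leaves implicit, namely the argument that this loop terminates. One small caveat: if the loop stops before the leaves are in the $\epsilon$-regime, membership of $\llbracket \rho \rrbracket_\nu$ in $I(\rho)$ relies on reading \Cref{def:interval-comput} as guaranteeing $f(B)\subseteq C$ for every input box, an assumption the paper also makes tacitly.
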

\begin{proof}
  The algorithm is straightforward. 
  It iterates through $\epsilon = 1,\frac{1}{2},\frac{1}{4},\dots$. 
  For a given $\epsilon$, it computes non-empty algebraic intervals $I(x) = [\nu(x)-\epsilon,\nu(x)+\epsilon]$ for all variables $x$ in $\tau$, 
  and sets $I(0) = [-\epsilon,\epsilon]$ and $I(1) = [1-\epsilon,1+\epsilon]$. It then relies on the algorithms from \Cref{def:interval-comput} to compute all the remaining intervals $I(\rho)$. 
  If $j_\tau - i_\tau < \delta$ then the algorithm terminates. 
  Otherwise, it continues with the next $\epsilon$.
\end{proof}

\subsection{An invariant for the abstraction $\phi^\alpha$}

Throughout its execution, our procedure maintains and updates an NRA abstraction $\phi^\alpha$ of the input formula $\phi$. 
Towards a proof of soundness of the procedure, we show that this abstraction satisfies the following property throughout the procedure:

\medskip
\begin{description}
  \item[Invariant ($\ast$):]
    (1)~All variables in $\phi$ occur in $\phi^\alpha$ and
    (2)~for every solution $\nu$ to $\phi$, $\nu^{\text{ext}}$ is a solution to~$\phi^\alpha$.
\end{description}

\medskip
Let us prove that Invariant~($\ast$) is satisfied by the first abstraction $\phi^\alpha$. This is essentially~\Cref{lemma:abstractionsound}:

\LemmaAbstractionSound*
\begin{proof}
  We prove the stronger version of the lemma required 
  by Invariant~($\ast$), showing that for every solution $\nu$ to $\phi$, the canonical extension of $\nu$ is a solution to $\phi^\alpha$.

  Recall that $\phi^\alpha$ is obtained from $\phi$ by 
  iterative rewritings, starting from the outermost terms, of the form
  \begin{equation*}
    \phi \longrightarrow \phi\sub{f_{(t_1,\dots,t_r)}^{\text{out}}}{f(t_1,\dots,t_r)} \land \textstyle\bigwedge_{i=1}^r f_{t_i}^{\text{in}} = t_i\,.
  \end{equation*}
  Item~(1) from the lemma is thus trivial.

  For a proof of Item~(2), consider a solution~$\nu$ to $\phi$, 
  and take its canonical extension $\nu^{\text{ext}}$. 
  We verify that this extended map satisfies 
  $\phi^\alpha$. We can do so by induction on the number of rewriting steps performed to obtain $\phi^\alpha$ from $\phi$. 
  The base case is trivial since $\phi^\alpha = \phi$. 
  For the inductive step, 
  let $\psi$ be the formula obtained from $\phi$ after performing some rewriting steps, and consider 
  $
    \psi' \coloneqq {\psi\sub{f_{(t_1,\dots,t_r)}^{\text{out}}}{f(t_1,\dots,t_r)} \land \textstyle\bigwedge_{i=1}^r f_{t_i}^{\text{in}} = t_i}.
  $
  By the inductive hypothesis, $\nu^{\text{ext}}$ satisfies $\psi$. 
  We show that $\nu^{\text{ext}}$ also satisfies $\psi'$.
  Since $f(t_1,\dots,t_r)$ occurs in $\phi$, 
  we have:
  \begin{enumerate}[label=(\alph*)]
    \item $\nu^{\text{ext}}(f_{t_i}^{\text{in}})  = \llbracket t_i \rrbracket_\nu = \llbracket t_i \rrbracket_{\nu^{\text{ext}}}$, for all $i \in [1..r]$.
    \item $\nu^{\text{ext}}(f_{(t_1,\dots,t_r)}^{\text{out}}) = \llbracket f(t_1,\dots,t_r)\rrbracket_\nu = \llbracket f(t_1,\dots,t_r) \rrbracket_{\nu^{\text{ext}}}$.
  \end{enumerate} 
  From Item~(a), $\nu^{\text{ext}}$ satisfies the conjunct ${\bigwedge_{i=1}^r f_{t_i}^{\text{in}} = t_i}$. 
  Consider then an atomic formula $\tau \sim 0$ $({{\sim} \in \{<,=\}})$. 
  From Item~(b), $\llbracket \tau \rrbracket_{\nu^{\text{ext}}} = \llbracket \tau\sub{f_{(t_1,\dots,t_r)}^{\text{out}}}{f(t_1,\dots,t_r)}  \rrbracket_{\nu^{\text{ext}}}$, which in turn implies that
  $\nu^{\text{ext}}$ satisfies $\tau \sim 0$ if and only if 
  it also satisfies $\tau\sub{f_{(t_1,\dots,t_r)}^{\text{out}}}{f(t_1,\dots,t_r)} \sim 0$. We conclude that $\nu^{\text{ext}}$ satisfies $\psi'$.
  %
  %
\end{proof}




Throughout the execution of our procedure, the abstraction $\phi^\alpha$ is only updated by conjoining it to the formulas computed by \Cref{pseudocode:refinement}.
The following lemma shows that these formulas are satisfied by the canonical extension of every solution~to~$\phi$. 

\begin{lemma}
  \label{lemma:invariant-refinement}
  Let $\psi$ be the formula computed by \Cref{pseudocode:refinement} on input $(\ell,T)$, and let $\nu$ be a solution to $\phi$. Then, the canonical extension of $\nu$ is a solution to $\psi$.
\end{lemma}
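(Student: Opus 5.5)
Since $\psi$ is a conjunction of implications, one for each subterm $\eta \coloneqq f(\vec t)$ of $\tau^\gamma$ with $f$ transcendental, we show that $\nu^{\text{ext}}$ satisfies each conjunct
\[
\Big(\textstyle\bigwedge_{\rho \in \mathcal{T}} i_{\rho} \leq f_{\rho}^{\text{in}} \leq j_{\rho}\Big) \Rightarrow i_\eta \leq f_{\vec t}^{\text{out}} \leq j_\eta .
\]
Let $\eta = f(t_1,\dots,t_r)$. The term $\eta$ occurs in $\phi$ because it is a subterm of $\ell^\gamma$ and $\ell$ is \texttt{original}. Hence \Cref{def:canonical-extension} gives $\nu^{\text{ext}}(f_{t_k}^{\text{in}}) = \llbracket t_k\rrbracket_\nu$ for each $k$, and $\nu^{\text{ext}}(f_{\vec t}^{\text{out}}) = f(\llbracket t_1\rrbracket_\nu,\dots,\llbracket t_r\rrbracket_\nu)$.

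Assume the antecedent holds under $\nu^{\text{ext}}$. Then $\llbracket t_k\rrbracket_\nu \in [i_{t_k}, j_{t_k}] = I(t_k)$ for every $k$, so the point $(\llbracket t_1\rrbracket_\nu,\dots,\llbracket t_r\rrbracket_\nu)$ lies in the box $I(t_1)\times\dots\times I(t_r)$. By \Cref{aux-lemma:composition}, and by line~\ref{refinement:Isubterms} of \Cref{pseudocode:refinement}, which reuses the boxes of \Cref{pseudocode:consistency}, the interval $I(\eta)$ is exactly the output of $I_f$ on this box. \Cref{def:interval-comput} guarantees $f(B) \subseteq I_f(B)$ for every input box $B$, so $\nu^{\text{ext}}(f_{\vec t}^{\text{out}}) \in [i_\eta, j_\eta]$ and the consequent holds. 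The argument uses no property of $\nu$ other than its values, so it holds for every solution $\nu$ to $\phi$.

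The main obstacle is technical. The containment property of \Cref{def:interval-comput} is stated only for boxes of small positive width, and it must be applied to whatever boxes \Cref{pseudocode:consistency} actually passed to $I_f$. I would argue that these boxes are non-degenerate, since the initial intervals have width $2\epsilon > 0$. I would also read $f(B) \subseteq I_f(B)$ as the soundness requirement on $I_f$ for every box it is given; this is exactly the third bullet of \Cref{aux-lemma:composition}. A second, minor point is that the variables $f_\rho^{\text{in}}$ and $f_{\vec t}^{\text{out}}$ appearing in $\psi$ must be the same ones introduced in the preprocessing step. They are, because the construction reuses the maps that connect $\phi$ and $\phi^\alpha$, and so the canonical extension defines them.
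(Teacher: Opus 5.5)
Your proposal is correct and follows essentially the same argument as the paper. Both fix a conjunct of $\psi$ and translate its antecedent via the canonical extension into $\llbracket t_k\rrbracket_\nu \in I(t_k)$. Both then use that $I(\eta)$ is produced by $I_f$ from these boxes (the third item of the composition lemma), and conclude from $f(B)\subseteq C$ together with $\nu^{\text{ext}}(f^{\text{out}}_{\vec t}) = \llbracket \eta\rrbracket_\nu$.

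You also note that the definition of interval computability literally guarantees containment only for sufficiently narrow boxes. That is a fair observation; the paper glosses over it in the same way you propose, by treating $f(B)\subseteq C$ as holding for every box passed to $I_f$.
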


\begin{proof}
  Let $\tau \sim 0$ be the atomic formula in $\ell$.
  Using the algorithm from~\Cref{aux-lemma:composition},
  \Cref{pseudocode:refinement} recomputes (line~\ref{refinement:Isubterms}) the intervals $I(\rho) = [i_\rho,j_\rho]$ for all
  subterms $\rho$ of $\tau^\gamma$ (including $\tau^\gamma$ itself).
  The formula $\psi$ is then defined as the conjunction of all implications
  of the form $\big((\bigwedge_{r=1}^n i_{t_r} \leq f_{t_r}^{\text{in}} \leq j_{t_r}) \Rightarrow i_\eta \leq f_{(t_1,\dots,t_n)}^\text{out} \leq j_\eta \big)$, 
  for every $\eta \coloneqq f(t_1,\dots,t_n)$ subterm of $\tau^\gamma$, with $f$  transcendental function (line~\ref{refinement:psi}).

  Consider now a solution $\nu$ to $\phi$, and let $\nu^{\text{ext}}$ be its canonical extension. Suppose $\nu^{\text{ext}}$ satisfies the antecedent of one of the implications in $\psi$, i.e., $\nu^{\text{ext}}(f_{t_r}^{\text{in}}) \in [i_{t_r},j_{t_r}]$ for all $r \in [1..n]$. 
  Equivalently, $\llbracket t_r \rrbracket_\nu \in I(t_r)$ for all $r \in [1..n]$.
  Form the last item in~\Cref{aux-lemma:composition}, $I(f(t_1,\dots,t_n))$ is computed from the intervals $I(t_1),\dots,I(t_n)$ by applying the algorithm $I_f$ from \Cref{def:interval-comput}. 
  From the property ``$f(B) \subseteq C$'' of the algorithm~$I_f$, stated in \Cref{def:interval-comput}, we conclude that  $\llbracket f(t_1,\dots,t_n) \rrbracket_\nu \in I(f(t_1,\dots,t_n))$. 
  By definition of the canonical extension, $\nu^{\text{ext}}(f_{(t_1,\dots,t_n)}^\text{out}) = \llbracket f(t_1,\dots,t_n) \rrbracket_\nu$.
  Therefore, $\nu^{\text{ext}}(f_{(t_1,\dots,t_n)}^\text{out}) \in I(f(t_1,\dots,t_n))$, and $\nu^{\text{ext}}$ satisfies the consequent of the implication. 
\end{proof}

\begin{lemma}
  \label{lemma:invariant-properties}
  Invariant~($\ast$) holds throughout the execution of the procedure.
\end{lemma}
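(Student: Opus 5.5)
The plan is an induction over the sequence of abstractions $\phi^\alpha_0, \phi^\alpha_1, \dots$ that the procedure builds during a run. Here $\phi^\alpha_0$ is the abstraction produced by the preprocessing phase (\Cref{subsec:first-abstraction}), and $\phi^\alpha_{k+1}$ is obtained from $\phi^\alpha_k$ by one update. The first step is to list every point where $\phi^\alpha$ is modified and check that each fits one of two shapes. Updates coming from the core solver or the NRA plugin only conjoin learned clauses. Updates coming from the TRA plugin conjoin the output $\psi$ of \Cref{pseudocode:refinement}, possibly with some conjuncts $\gamma$ strengthened to $\chi$ as in \Cref{section:special-treatment}. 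The extra bound clauses mentioned in the footnote of \Cref{section:special-treatment} should be treated as part of $\phi^\alpha_0$, or as one further conjunction step handled the same way. Backtracking and restarts with a smaller $\delta$ preserve the learned lemmas and never remove conjuncts, so $\phi^\alpha$ only ever grows by conjunction.

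The base case is exactly the stronger form of \Cref{lemma:abstractionsound} proved above. That proof shows that for every solution $\nu$ to $\phi$ the canonical extension $\nu^{\text{ext}}$ solves $\phi^\alpha_0$, and that every variable of $\phi$ occurs in $\phi^\alpha_0$.

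For the inductive step, suppose Invariant~($\ast$) holds for $\phi^\alpha_k$ and $\phi^\alpha_{k+1} = \phi^\alpha_k \land \psi$. Item~(1) is immediate, because conjoining a formula never deletes occurrences of variables. For item~(2), fix a solution $\nu$ to $\phi$. By the induction hypothesis $\nu^{\text{ext}} \models \phi^\alpha_k$, so it remains to show $\nu^{\text{ext}} \models \psi$. If $\psi$ is the output of \Cref{pseudocode:refinement}, this is \Cref{lemma:invariant-refinement}. If some conjuncts were replaced by $\chi$, I would invoke the preservation clause of \Cref{lemma:sinechi} and its exponential analogue. That clause has to be read as asserting that $\chi$ is satisfied by the canonical extension, since $\pi_1^{\text{out}}$ and $e_1^{\text{out}}$ are assigned $\pi$ and $e$ and the Taylor and Padé bounds are valid. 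If $\psi$ is a clause learned by the core solver or the NRA plugin, then $\psi$ is entailed by the current formula $\phi^\alpha_k$, by soundness of CDCL and MCSAT conflict analysis (\Cref{assumption:nra}). So $\nu^{\text{ext}} \models \psi$ follows from $\nu^{\text{ext}} \models \phi^\alpha_k$.

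I expect the main obstacle to be this last case. One must argue that learned clauses are consequences of $\phi^\alpha_k$ in the theory of NRA, not merely of some weaker formula. When the TRA plugin raises a conflict, the explanation clause is $\psi$ itself, so it falls into the refinement case. The point needing care is that the core solver's subsequent resolution steps combine $\psi$ with clauses of $\phi^\alpha_k$, and so still produce consequences of $\phi^\alpha_k \land \psi$. A second subtlety is the literal values $\top^\delta$. Since every plugin other than the TRA plugin treats $\top^\delta$ as $\top$, conflict analysis reasons only about $\phi^\alpha$ and never about $\delta$-weakenings. Learned clauses are therefore genuine NRA consequences, and the induction closes.
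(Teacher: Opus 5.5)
Your proof is correct and follows the same argument as the paper. The base case is the strengthened form of \Cref{lemma:abstractionsound}, item~(1) survives because $\phi^\alpha$ only grows by conjunction, and item~(2) is preserved because the canonical extension satisfies each conjoined $\psi$ by \Cref{lemma:invariant-refinement}. You are also more careful than the paper, whose proof only considers outputs of \Cref{pseudocode:refinement}: you handle clauses learned by the core solver and the NRA plugin (entailed by the current $\phi^\alpha$), and you note that the $\chi$-substitution of \Cref{lemma:sinechi} requires the Taylor and Pad\'e bounds to hold at $\pi_1^{\text{out}}=\pi$ and $e_1^{\text{out}}=e$, not merely $\chi \models \gamma$, which the paper glosses over.
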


\begin{proof}
  As already stated, $\phi^\alpha$ is only updated by conjoining formulas $\psi$ computed by \Cref{pseudocode:refinement}. 
  By \Cref{lemma:abstractionsound}, the invariant holds for the first abstraction.

  The first (syntactic) property of the invariant is unaffected by updates to $\phi^\alpha$, since the procedure only conjoins new clauses to it.
  The second property is ``closed under conjunctions'': for every $\nu$ solution to $\phi$, 
  if the canonical extension of $\nu$ is a solution to $\phi^\alpha$ and to $\psi$, then it is also a solution to $\phi^\alpha \land \psi$. 
  Therefore, the second property is preserved by~\Cref{lemma:invariant-refinement}.
\end{proof}

\subsection{Correctness of~\Cref{pseudocode:consistency} and \Cref{pseudocode:refinement}}

\TheoremConsistencySpecification*
\begin{proof}
  Let $(\ell,T)$ be an input literal and trail, with~$\allvars(\ell) \subseteq \dom(T)$.
  If $\ell^\gamma$ contains no transcendental function, then ${\ell^\gamma = \ell}$. That is, $\ell^\gamma$ is a
  formula from NRA. The NRA plugin already verified that $T(\ell)$ is consistent with its satisfaction, 
  and since the NRA plugin is sound by \Cref{assumption:nra}, it holds that 
  $T\models \ell$ if and only if $b=\top$ and hence $T\models \ell^\gamma$ if and only if $b=\top$. 
  
  Consider now the case where $\ell^\gamma$ is not in \nra. Let $\tau\sim 0$ the atomic formula in $\ell$.
  Using the algorithm from~\Cref{aux-lemma:composition}, 
  the algorithm computes a non-empty interval $[i,j]$ such that 
  $0 < j-i < \delta$ and $T\models i \leq \tau^\gamma \leq j$.

  The table in line~\ref{pseudo:returntable} perform a 
  complete case analysis on the possible values of $i$ and $j$, 
  with respect to their position with respect to the origin.
  We show that the algorithm is correct for $\ell$ of the form $\tau < 0$
  (and hence $\ell^\gamma$ equal to $\tau^\gamma <0$). 
  All other cases are similar.

  \begin{itemize}
    \item If $j < 0$ then $\tau^\gamma \leq j< 0$, 
    which implies $T\models \ell^\gamma$. Accordingly, the algorithm returns $\top$.
    \item If $i > 0$ then $\tau^\gamma \geq i> 0$, which implies $T\not\models \ell^\gamma$. Accordingly, the algorithm returns $\bot$.
    
    \item If $i = j = 0$ then $\tau^\gamma = 0$, which implies $T\not\models \ell^\gamma$. Accordingly, the algorithm returns $\bot$.
    \item If $0 = i < j$ then $\tau^\gamma \geq i = 0$, which implies $T\not\models \ell^\gamma$. Accordingly, the algorithm returns $\bot$.
    \item Finally, if $i < 0 \leq j$ then $\tau^\gamma \in [i,j]$ with $j-i < \delta$. Therefore, $-\delta < \tau^\gamma < \delta$. 
    This implies $T\models (\ell^\gamma)^\delta$, 
    and, accordingly, the algorithm returns $\top^\delta$.
    \qedhere
  \end{itemize}
\end{proof}

We now move to the proof of the correctness of \Cref{pseudocode:refinement}. 
We will first show it assuming that the formula $\phi^\alpha$ satisfies the invariant mentioned at the beginning of this section. 

\TheoremRefinementSpecification*
\begin{proof}
  Let $(\ell,T)$ be the input literal and trail such that $\allvars(\ell) \subseteq \dom(T)$.
  Let also $b$ be the output of \Cref{pseudocode:consistency} with input $(\ell,T)$ where
  $b \in \{\top,\bot\}$, and $T(\ell) \neq b$. 
  Recall that~\Cref{pseudocode:consistency} and~\Cref{pseudocode:refinement} 
  are called when the consistency check of the NRA plugin passes, i.e., $T$ satisfies all (NRA) literals in the trail.
  We have already proven that $\phi$ can be extended to a solution of $\phi^\alpha\land \psi$ when establishing~\Cref{lemma:invariant-properties}. 
  Therefore, in order to prove the lemma it suffices to show that $T \not\models \psi$.

  Assume~$\ell$ of the form $\tau < 0$ (the other cases are similar), 
  and let~$\nu \colon \allvars(\ell) \to \mathbb{R}$ 
  be the map defined as $\nu(x) \coloneq T(x)$ for all $x \in \allvars(\ell)$.
  
  Using the algorithm from~\Cref{aux-lemma:composition},
  \Cref{pseudocode:refinement} recomputes (line~\ref{refinement:Isubterms}) the intervals $I(\rho)$ for all
  subterms $\rho$ of $\tau^\gamma$ (including $\tau^\gamma$ itself).
  From $T(\ell)\neq b$ it follows that $\llbracket \tau \rrbracket_\nu \notin I(\tau^\gamma)$.
%
%
  Recall that the formula $\psi$ contains, 
  for every subterm~$\eta = f(t_1,\dots,t_n)$ of $\tau^\gamma$ with $f$ transcendental, 
  the formula $\left(\bigwedge_{i=1}^n f^{\text{in}}_{t_i}\in I(t_i)\Rightarrow f^{\text{out}}_{\vec{t}}\in I(f(t_1,\dots,t_n))\right)$ as a conjunct.

  For the sake of contradiction suppose that $T\models \psi$. 
  We show that then $\llbracket f^{\text{in}}_{t_i}\rrbracket_\nu \in I(t_i)$ 
  for every $\eta = f(t_1,\dots,t_n)$ as above. 
  The proof is by induction on the number of functions applications occurring in $t_i$,
  including additions~$+$ and multiplications~$\cdot$.
  Note that $t_i$ is a term from the original formula $\phi$. 

  \begin{description}
    \item[base case: $t_i$ is a variable $x$, the constant $0$, or the constant $1$.] 
      The latter two cases are trivial, since $0\in I(0)$ and $1\in I(1)$ by construction of the intervals. When $t_i$ is a variable $x$, 
      following~\Cref{remark:initial-propagation} we see that 
      $(f^{\text{in}}_x = x) \mapsto \top$ belongs to the trail. 
      Therefore, $\llbracket f^{\text{in}}_x \rrbracket_\nu  = \llbracket x \rrbracket_\nu$. By~\Cref{aux-lemma:composition}, $\llbracket x \rrbracket\in I(x)$, and so $\llbracket f^{\text{in}}_x \rrbracket \in I(x)$.
    \item[inductive case.]
      Consider $t_i = P(f_{i_1}(\vec t_1),\dots,f_{i_m}(\vec t_m))$, where $P$ is a polynomial, $f_{i_1},\dots,f_{i_m}$ are transcendental functions, 
      and $\vec{t_j} \coloneqq (t_{j1},\dots,t_{jn_j})$.
      By the induction hypothesis, for every $j \in [1..m]$ and for
      every $t \in \{t_{j1},\dots,t_{jn_j}\}$, 
      we have~$\llbracket (f_{i_j})^{\text{in}}_{t} \rrbracket \in I(t)$.
      So, as we are assuming $T\models\psi$, we have ${\llbracket (f_{i_j})^{\text{out}}_{\vec{t_j}}\rrbracket \in I(f_{i_j}(\vec{t_j}))}$, for every ${j \in [1..m]}$.
      By applying \Cref{aux-lemma:composition} on all additions and multiplications in $P$, $\llbracket P((f_{i_1})^{\text{out}}_{\vec{t_1}},\dots,(f_{i_m})^{\text{out}}_{\vec{t_m}}) \rrbracket\in I(P(f_{i_1}(\vec{t_1}),\dots,f_{i_m}(\vec{t_m}))) = I(t_i)$.
      Following~\Cref{remark:initial-propagation}, 
      the trail contains the assignment 
      $(f^{\text{in}}_{t_i} = P((f_{i_1})^{\text{out}}_{\vec{t_1}},\dots,(f_{i_m})^{\text{out}}_{\vec{t_m}})) \mapsto \top$. 
      Hence, $\llbracket f^{\text{in}}_{t_i} \rrbracket = \llbracket P((f_{i_1})^{\text{out}}_{\vec{t_1}},\dots,(f_{i_m})^{\text{out}}_{\vec{t_m}}) \rrbracket$, 
      and $\llbracket f^{\text{in}}_{t_i} \rrbracket \in I(t_i)$.
  \end{description}
  This concludes the proof that $\llbracket f^{\text{in}}_{t_i}\rrbracket_\nu \in I(t_i)$, for every $\eta = f(t_1,\dots,t_n)$ described above.
  Observe that, as we are assuming $T\models \psi$, this implies $\llbracket f_{\vec{t}}^{\text{out}} \rrbracket \in I(f(\vec t))$ for all the "out" variables occurring in $\allvars(\ell)$.

  Finally, the term $\tau$ is of the form $Q((f_{i_1})^{\text{out}}_{\vec{t_1}},\dots,(f_{i_k})^{\text{out}}_{\vec{t_k}})$, for some polynomial $Q$, 
  and we have $\llbracket (f_{i_j})^{\text{out}}_{\vec{t_j}} \rrbracket \in I(f_{i_j}(\vec{t_j}))$ for every $j \in [1..k]$. 
  Applying again \Cref{aux-lemma:composition} on all 
  addition and multiplication of $Q$, one sees that $\llbracket \tau \rrbracket_\nu \in I(\tau^\gamma)$; a contradiction.
  Therefore, $T\not\models\psi$, concluding the proof.
\end{proof}

We can now complete the proof of soundness of the procedure:

\TheoremSoundnessDelta*
\begin{proof}

  For the procedure to return \texttt{SAT} it must be the case
  that for every \texttt{original} 
  clause~$c$, there is a fully assigned literal $\ell_c$ from
  $\phi^\alpha$ that is true in the trail. Following~\Cref{lemma:algo1-is-delta-consistency-check}, 
  \Cref{pseudocode:consistency} must have
  returned $\top$ on $\ell_c$, as otherwise the algorithm would have either
  triggered a conflict or replaced $\top$ with $\top^\delta$ (see code before~\Cref{remark:delta-consistency}). 
  By~\Cref{remark:delta-consistency} (and~\Cref{lemma:algo1-is-delta-consistency-check}), $T\models \ell_c^\gamma$. 
  Since this holds for all literals $\ell_c$, the formula $\phi$ is satisfiable.

  As explained in \Cref{subsec:high-level-view}, 
  For the procedure to return \dsat it must be the case
  that for every \texttt{original} 
  clause~$c$, there is a fully assigned literal $\ell_c$ from
  $\phi^\alpha$ that is assigned $\top$ or $\top^\delta$ in the trail, 
  and moreover there is at least one \texttt{original} clause 
  with no literal assigned $\top$.
  Following the same argument as for the \sat case,
  clauses from $\phi$ in which the corresponding clause in $\phi^\alpha$ has at least one literal assigned $\top$ are satisfied. 
  Consider then a clause $c$ with a literal $\ell_c$ assigned $\top^\delta$.
  We show that $T \models (\ell_c^\gamma)^\delta$. 
  Let $\ell_c$ be of the form $\tau < 0$; the other cases are analogous. 
  Since $T$ assigns $\top^\delta$ to $\ell_c$,~\Cref{pseudocode:consistency} 
  outputs $\top^\delta$ on $(\ell_c,T)$.
  Since $[i,j]$ in line~\ref{pseu:cons:K} satisfy $j-i<\delta$, 
  following the table in line~\ref{pseudo:returntable} we see that 
  $i < 0 \leq j$.
  Since $j-i<\delta$, we then have $-\delta < i \leq j < \delta$. 
  This implies $m(\tau^\gamma)<\delta$, and so $\tau^\gamma < \delta$ holds.


  Finally, let us assume that the procedure returns \texttt{UNSAT}.
  This means that a conflict is triggered 
  when the trail has no backtrack points (the trail feature no decisions). 
  We claim that $\phi^\alpha$ is unsatisfiable, which
  by~\Cref{lemma:invariant-properties} 
  implies that $\phi$ is unsatisfiable as well. 
  There are a priori three cases for where the inconsistency is detected:
  (1) at the Boolean level, (2) by 
  the \nra plugin, (3) by the \tra plugin 
  (via~\Cref{pseudocode:consistency}). 
  In cases~(1) and~(2), unsatisfiability of $\phi^\alpha$ follows from the soundness of the \nra plugin (\Cref{assumption:nra}) and the fact that MCSAT extends the Boolean CDCL procedure. Case~(3) is instead not possible:
  since the trail has no decisions, 
  the inconsistency detected by~\Cref{pseudocode:consistency} would be on an input $(\ell,T)$ such that $\allvars(\ell) = \emptyset$. This implies that $\ell$ is a variable-free NRA literal. 
  However,~\Cref{pseudocode:consistency} simply returns $T(\ell)$ on such literals, 
  which does not trigger a conflict. 
\end{proof}

\section*{Additional material for \Cref{section:special-treatment}}\label{appendix:more-details-sin-exp}

We prove \Cref{lemma:sinechi} in both the cases of $\chi^n_{[\ell,u]}$ 
refining the lemmas for the sine and for the exponential functions.

\begin{remark}\label{remark:chi-monotone}
  The formula $\chi^n_{[\ell,u]}$ is ``monotone'' in the following sense: 
  For every $n' \geq n$ and every $[\ell',u'] \subseteq [\ell,u]$, the formula $\chi^{n'}_{[\ell',u']}$ 
  entails the formula $\chi^n_{[\ell,u]}$.
  This follows directly from the fact that increasing $n$ tightens 
  the Taylor and/or Padé approximations, while shrinking $[\ell,u]$ tightens 
  the approximation of $\pi$ (or of $e$). This has two consequences:
  \begin{itemize}
    \item If $\chi^n_{[\ell,u]}$ entails $\gamma$, then for every $n' \geq n$ and every $[\ell',u'] \subseteq [\ell,u]$, the formula $\chi^{n'}_{[\ell',u']}$ entails $\gamma$.
    \item If $\chi^n_{[\ell,u]}$ satisfies Property~\ref{property:p}, then for every $n' \geq n$ and every $[\ell',u'] \subseteq [\ell,u]$, the formula $\chi^{n'}_{[\ell',u']}$ satisfies Property~\ref{property:p}.
  \end{itemize}
\end{remark}

\LemmaSineChi*
We proceed first with the proof for the sine function.
\begin{proof}
  We first show that such $\ell$, $u$ and $n$ exist.
  By~\Cref{prop:sin:3} in \Cref{subsec:chi-sine}, as $n \to \infty$ and $[\ell,u] \to \{\pi\}$, 
  the solutions of $\chi^n_{[\ell,u]}$ converge to those of 
  $\pi_1^{\text{out}} = \pi \land \sin_t^{\text{out}} = \sin(\sin_t^{\text{in}})$.
  Since the image of the sine function evaluated on the input interval lies 
  strictly within the output interval of $\gamma$ (\Cref{eq:gamma}), any sufficiently 
  close approximation also satisfies $\gamma$. Then, there exist $n$ large enough 
  and $[\ell,u]$ tight enough around $\pi$ such that $\chi^n_{[\ell,u]}$ 
  entails $\gamma$. Similarly, the convergence guarantees that 
  Property~\ref{property:p} is satisfied for sufficiently large $n$ 
  and sufficiently tight $[\ell,u]$. This is essentially because the error of Taylor 
  approximations goes to $0$ as $n \to \infty$ and $[\ell,u] \to \{\pi\}$.

  Let us discuss now how to compute $\ell$, $u$ and $n$.
  First, let us recall that $\pi$ is computable: there is a Turing $A$ machine 
  that on input $i \in \N$, returns a pair $(\ell_i,u_i)$ or rationals such that 
  $\pi \in [\ell_i,u_i]$ and $u_i - \ell_i = 2^{-i}$. 
  The algorithm iterates over $n \in \N$, starting from $0$. 
  At each step, it computes $[\ell,u] = A(n)$, and checks 
  whether $\chi^n_{[\ell,u]}$ entails $\gamma$ and satisfies 
  Property~\ref{property:p}.
  Both checks are decidable, as $\chi^n_{[\ell,u]}$ and $\gamma$ are NRA formulas, 
  and Property~\ref{property:p} can also be stated as an NRA formula: 
  \begin{equation}\label{eq:property-p-in-nra}
    \begin{aligned}
      \forall a,b \Big( &(b-a \leq j_{\text{in}} - i_{\text{in}} \land 
        \round{i_{\text{in}}}-1 \leq a < b \leq \round{j_{\text{in}}}+1) \implies\\
      &\qquad\exists c,d : 0 < d-c \leq j_{\text{out}} - i_{\text{out}} \land \big(\forall \sin_t^{\text{in}},\sin_t^{\text{out}},\pi_1^{\text{out}} (a \leq \sin_t^{\text{in}} \leq b \land \chi^n_{[\ell,u]} \implies c \leq \sin_t^{\text{out}} \leq d)\big) \Big).
    \end{aligned}
  \end{equation}
  (Even though not needed for the lemma, let us remark that deciding this entailments can in fact be done in polynomial time, as they only involve a fixed number of variables~\cite{Renegar92}.)
  From the existence argument given at the beginning of the proof, and~\Cref{remark:chi-monotone}, 
  the two entailment will eventually be both true, and at that point the algorithm terminates.

  The last statement of the lemma follows directly from the fact that~$\chi^n_{[\ell,u]} \models \gamma$.
\end{proof}
Now we comment on some straightforward adaptations for the proof for the exponential function.
\begin{proof}
  The proof follows the same structure as the sine case, with two 
  differences. First, the Turing machine $A$ now computes pairs $(\ell_,u_i)$ that sandwich 
  $e$ rather than $\pi$. Second, the error 
  bound for Padé approximations requires avoiding poles inside the 
  interval of approximation. This is ensured by the choice 
  $c \coloneqq \round{j_{\text{in}}}+1$: as observed in \Cref{subsec:special-treatment-exp}, 
  this places the unique root of $Q_{2n+1}(\cdot,c)$ outside the region 
  of interest (\Cref{item:pole-pade} in \Cref{subsec:special-treatment-exp}).
\end{proof}

%
%
%
%
%
\end{document}